\newtheorem{theorem}{Theorem}[section]
\newtheorem{lemma}[theorem]{Lemma}
\newtheorem{observation}[theorem]{Observation}
\newtheorem{claim}[theorem]{Claim}
\newtheorem*{rep@theorem}{\rep@title}
\newcommand{\newreptheorem}[2]{%
\newenvironment{rep#1}[1]{%
 \def\rep@title{#2 \ref{##1}}%
 \begin{rep@theorem}}%
 {\end{rep@theorem}}}
\newcommand{\defcal}[1]{\expandafter\newcommand\csname c#1\endcsname{{\mathcal{#1}}}}
\newcommand{\defbb}[1]{\expandafter\newcommand\csname b#1\endcsname{{\mathbb{#1}}}}
\newcounter{calBbCounter}
    \edef\letter{\Alph{calBbCounter}}
\newcommand{\eps}{\varepsilon}
\newcommand{\ie}{{\it i.e.}}
\newcommand{\eg}{{\it e.g.}}
\newcommand{\characteristic}{{\mathbf{1}}}
\DeclareMathOperator{\support}{supp}
\DeclareMathOperator*{\argmax}{arg\;max}
\begin{document}

\title{{\bf Deterministic Algorithms for Submodular Maximization Problems}}

\author{
Niv Buchbinder\thanks{Department of Statistics and Operations Research, Tel Aviv University, Israel.  e-mail: niv.buchbinder@gmail.com}
\and
Moran Feldman\thanks{School of Computer and Communication Sciences, EPFL, Switzerland. e-mail: moran.feldman@epfl.ch.}
}

\maketitle

\begin{abstract}
Randomization is a fundamental tool used in many theoretical and practical areas of computer science. We study here the role of randomization in the area of submodular function maximization.
In this area most algorithms are randomized, and in almost all cases the approximation ratios obtained by current randomized algorithms are superior to the best results obtained by known deterministic algorithms. Derandomization of algorithms for general submodular function maximization seems hard since the access to the function is done via a value oracle. This makes it hard, for example, to apply standard derandomization techniques such as conditional expectations.
Therefore, an interesting fundamental problem in this area is whether randomization is inherently necessary for obtaining good approximation ratios.

In this work we give evidence that randomization is not necessary for obtaining good algorithms by presenting a new technique for derandomization of algorithms for submodular function maximization.
Our high level idea is to maintain explicitly a (small) distribution over the states of the algorithm, and carefully update it using marginal values obtained from an extreme point solution of a suitable linear formulation.
We demonstrate our technique on two recent algorithms for unconstrained submodular maximization and for maximizing submodular function subject to a cardinality constraint. In particular, for unconstrained submodular maximization we obtain an optimal deterministic $1/2$-approximation showing that randomization is unnecessary for obtaining optimal results for this setting.
\end{abstract}

\thispagestyle{empty}
\setcounter{page}{0}
\newpage

\pagenumbering{arabic}

\section{Introduction}
Randomization is a fundamental tool used in many theoretical and practical areas of computer science \cite{AS20,MU05}. It is widely used, for example, in cryptology, coding, and the design of approximation and online algorithms.
In the area of approximation algorithms randomization is used both to improve the running time of algorithms and to obtain improved approximation ratios. However, in the majority of cases randomization can be removed to obtain an equivalent deterministic algorithm with the same approximation ratio (albeit, in some cases, with a longer running time).
One central field of combinatorial optimization in which it is unclear whether it is possible to avoid randomization is the field of submodular function maximization.

A set function $f:2^{\mathcal{N}}\rightarrow \mathbb{R}$ is submodular if for every $A,B\in \mathcal{N}$:
$f(A\cap B)+f(A\cup B)\leq f(A) +f(B)$. An equivalent definition of submodularity, which is perhaps more intuitive, is that of diminishing returns:
$ f(A\cup \{ u\}) - f(A) \geq f(B\cup \{ u\}) - f(B)$ for every $A\subseteq B\subseteq \mathcal{N}$ and $u\notin B$.
The concept of diminishing returns is widely used in economics, and thus, it should come as no surprise that utility functions in economics are often submodular. Additionally, submodular functions are ubiquitous in various other disciplines, including combinatorics, optimization, information theory,
operations research, algorithmic game theory and machine learning. A few well known examples of submodular functions from these disciplines include cut functions of graphs and hypergraphs,
rank functions of matroids, entropy, mutual information, coverage functions and budget additive functions. Moreover, submodular maximization problems capture well known combinatorial optimization problems such as: Max-Cut \cite{GW95,H01,K72,KKMO07,TSSW00}, Max-DiCut \cite{FG95,GW95,HZ01},
Generalized Assignment \cite{CK05,CKR06,FV06,FGMS06} and Max-Facility-Location \cite{AS99,CFN77a,CFN77b}.

For a general submodular function the explicit representation of the function might be exponential in the size of its ground set.
Hence, it is standard practice to assume that the function is accessed via a value oracle returning the value of $f(S)$ when given a set $S \subseteq \cN$.
The use of a value oracle makes it difficult to use the standard derandomization method of conditional expectations (see, \eg, \cite{AS20}). Moreover, other standard methods, such as using a small sample space, seem to fail as well. Indeed, for most scenarios the best current approximation algorithms are randomized, while known deterministic algorithms achieve only inferior approximation ratios.
One example is the problem of unconstrained submodular maximization~\cite{BFNS12} for which the best (optimal) randomized algorithm has an approximation ratio of $1/2$, while the approximation ratio of the best known deterministic algorithm is only $1/3$. Another example is maximizing a monotone submodular function subject to a matroid constraints~\cite{FNW78,CCPV11,FNS11}. For this problem the best deterministic algorithm has an approximation ratio of $1/2$, while the best (optimal) randomized algorithm has an improved approximation ratio of $1-1/e$. An interesting fundamental problem is whether randomization is inherently necessary for obtaining good approximation ratio in the field of submodular function maximization.

\subsection{Our results}
In this paper we give evidence that randomization is not necessary for obtaining good algorithms for submodular maximization. We present a new technique for derandomization of algorithms for submodular maximization based on the following idea. In a typical randomized algorithm the size of the support of the distribution implied by the algorithm is exponential (as otherwise it can often be trivially enumerated).
We show that in certain cases the size of the distribution can be kept small (polynomial) throughout the execution of the algorithm. This is done by formulating the set of ``good" updates to the distribution in each iteration of the algorithm as a linear formulation, and then choosing an update step that corresponds on an (optimal) {\bf extreme} point of the linear program. We then use the fact that an extreme point for our formulations does not have many non-integral variables to control the increase in the size of the support of the distribution. This allows us to maintain the distribution explicitly throughout the execution. We are not aware of any previous results that obtain derandomization via such an approach, and believe that this idea may be applicable for other settings as well.

We demonstrate our technique on two recent algorithms. The first of them is a randomized $1/2$-approximation algorithm presented by~\cite{BFNS12} for the problem of unconstrained submodular maximization. It is known that the approximation ratio of the last algorithm cannot be improved by any polynomial time algorithm~\cite{FMV11}. Using our technique we obtain the following result. In this result, and throughout the rest of the paper, we use $n$ to denote the size of the ground set $\cN$.

\begin{theorem} \label{thm:unconstrained}
Let $f$ be a submodular function.
There exists a {\bf deterministic} algorithm with an approximation ratio of $1/2$ for the problem $\max_{S\subseteq \cN}\{f(S)\}$. The algorithm makes $O(n^2)$ value oracle queries.
\end{theorem}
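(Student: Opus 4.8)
The plan is to derandomize the ``double greedy'' algorithm of~\cite{BFNS12}, whose analysis is the engine behind the $1/2$ guarantee. Recall that algorithm, for a nonnegative submodular $f$ (nonnegativity is the usual assumption for this problem): fix an order $u_1,\dots,u_n$ of $\cN$, start with $X_0=\emptyset$, $Y_0=\cN$, and in iteration $i$ compute the marginals $a_i=f(X_{i-1}\cup\{u_i\})-f(X_{i-1})$ and $b_i=f(Y_{i-1}\setminus\{u_i\})-f(Y_{i-1})$ (which satisfy $a_i+b_i\ge 0$ by submodularity); then either add $u_i$ to $X$ (set $X_i=X_{i-1}\cup\{u_i\}$, $Y_i=Y_{i-1}$) with probability $a_i^{+}/(a_i^{+}+b_i^{+})$, where $x^{+}=\max\{x,0\}$, or remove $u_i$ from $Y$ (set $X_i=X_{i-1}$, $Y_i=Y_{i-1}\setminus\{u_i\}$) otherwise; finally return $X_n=Y_n$. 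Writing $O$ for an optimal set and $O_i=(O\cup X_i)\cap Y_i$, the analysis shows that each iteration satisfies $\mathbb{E}[f(O_{i-1})-f(O_i)]\le \tfrac12\mathbb{E}[(f(X_i)-f(X_{i-1}))+(f(Y_i)-f(Y_{i-1}))]$, which telescopes (using $O_0=O$, $O_n=X_n=Y_n$, and $f\ge 0$) to $\mathbb{E}[f(X_n)]\ge \tfrac12 f(O)$. The only randomness is the per-iteration binary choice, and after iteration $i$ the ``state'' $(X_i,Y_i)$ is determined by $X_i\subseteq\{u_1,\dots,u_i\}$, so a run of the algorithm induces a distribution over states.

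I would maintain this distribution explicitly while keeping its support polynomial. Start with $\cD_0$ supported on $(\emptyset,\cN)$. At the start of iteration $i$, let $\cD_{i-1}$ be supported on states $(X^{(j)},Y^{(j)})$ with probabilities $p_j$, $j=1,\dots,k$; compute $a_j=f(X^{(j)}\cup\{u_i\})-f(X^{(j)})$ and $b_j=f(Y^{(j)}\setminus\{u_i\})-f(Y^{(j)})$ for each $j$ (cheap if we keep $f(X^{(j)})$ and $f(Y^{(j)})$ on record). Introduce a variable $z_j\in[0,1]$ per state --- the conditional probability of adding $u_i$ to $X$ from state $j$ --- and write a linear program whose feasible region is $[0,1]^k$ intersected with exactly two linear inequalities: one forcing the expected change, over the distribution, of the potential $\Psi=\mathbb{E}[f(O)]+\tfrac12\mathbb{E}[f(X)+f(Y)]$ to be nonnegative in the case $u_i\in O$, and one forcing the same in the case $u_i\notin O$. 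The coefficients must not refer to $O$; the point is that the only $O$-dependent quantity in iteration $i$ is a marginal of $u_i$ with respect to a set $T$ sandwiched as $X^{(j)}\subseteq T\subseteq Y^{(j)}\setminus\{u_i\}$, and by submodularity this marginal lies between $-b_j$ and $a_j$, both known; substituting whichever of these two bounds is the one that appears yields inequalities with known coefficients that are only weaker than the ``true'' ones. The algorithm then computes an extreme point $z^\star$ of this polytope and forms $\cD_i$ by splitting each state $j$: if $z_j^\star\in(0,1)$ it becomes $(X^{(j)}\cup\{u_i\},Y^{(j)})$ with probability $p_jz_j^\star$ and $(X^{(j)},Y^{(j)}\setminus\{u_i\})$ with probability $p_j(1-z_j^\star)$; if $z_j^\star\in\{0,1\}$ it stays a single state.

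Three points then close the argument. (i) Feasibility: the original transition probabilities $a_j^{+}/(a_j^{+}+b_j^{+})$ satisfy both inequalities for every state, hence satisfy the aggregated versions; checking this is exactly re-running the case analysis of~\cite{BFNS12} (cases $a_j\ge0\ge b_j$; $a_j\le0\le b_j$; $a_j,b_j>0$), where each case reduces to an inequality like $\tfrac12(a_j-b_j)^2\ge0$. (ii) Support growth: a vertex of a polytope contained in $[0,1]^k$ and cut by only two extra linear inequalities has at most two fractional coordinates --- a vertex is pinned by $k$ linearly independent tight constraints, at most two of which are the extra inequalities, so at least $k-2$ coordinates equal $0$ or $1$ --- hence $|\support(\cD_i)|\le|\support(\cD_{i-1})|+2$, giving $|\support(\cD_i)|=O(i)$ throughout (no collisions among the new states occur, since $u_i\notin X^{(j)}$ for all $j$ and the set $X$ determines the state). (iii) Correctness: by construction $\Psi$ never decreases along the iterations; it equals $f(O)+\tfrac12(f(\emptyset)+f(\cN))$ at the start and $2\,\mathbb{E}_{\cD_n}[f(X_n)]$ at the end (every state then has $X_n=Y_n=O_n$), so (using $f\ge0$) $f(O)\le 2\,\mathbb{E}_{\cD_n}[f(X_n)]\le 2\max_{S\in\support(\cD_n)}f(S)$; the algorithm outputs the best set in $\support(\cD_n)$, a deterministic $1/2$-approximation. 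Counting queries: iteration $i$ uses $2\,|\support(\cD_{i-1})|=O(i)$ value-oracle queries and the linear-programming work uses none, for a total of $O(n^2)$.

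The hard part is the design of the two linear inequalities: they must be oblivious to $O$ yet, for the $O_{i-1}$ that actually arises, strong enough to force the per-iteration drop bound of~\cite{BFNS12}. Once the submodular sandwich $-b_j\le(\text{marginal of }u_i\text{ w.r.t. a set between }X^{(j)}\text{ and }Y^{(j)}\setminus\{u_i\})\le a_j$ is used to pin these inequalities down to known coefficients, feasibility and the extreme-point bound are comparatively routine, and the rest of the construction follows the template described above.
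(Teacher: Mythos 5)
Your proposal is correct and is essentially the paper's own construction: the two $O$-oblivious aggregated inequalities you describe (obtained by sandwiching the $O$-dependent marginal between $-b_j$ and $a_j$) are exactly the two constraints of the paper's LP $(P)$ (with $w=1-z$ eliminated), feasibility is certified by the same double-greedy probabilities $a_j^{+}/(a_j^{+}+b_j^{+})$, the extreme-point argument gives the same growth of at most $2$ states per iteration, and your potential $\Psi$ is just a repackaging of the paper's per-iteration lemma and telescoping. No substantive differences.
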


Notice that this result shows that randomization is not necessary for obtaining the best possible approximation ratio for the problem. However, this comes at a cost, as the number of oracle queries made by our deterministic algorithm is $O(n^2)$, while the randomized algorithm only needs $O(n)$ oracle queries.

Our second result is for maximizing a (non-monotone) submodular function subject to a cardinality constraint. For this problem the best known randomized algorithm has an approximation guarantee of $1/e+0.004$~\cite{BFNS14}. This algorithm is based on a combination of two randomized algorithms; one of which is an elegant randomized greedy algorithm that has approximation ratio $1/e$ (on its own). We show that one can obtain an equivalent deterministic algorithm.

\begin{theorem} \label{thm:cardinality}
Let $f$ be a submodular function.
There exists a {\bf deterministic} algorithm that has an approximation ratio of $1/e$ for the problem of $\max_{|S|\leq k}\{f(S)\}$. The algorithm makes $O(k^2 n)$ value oracle queries.
\end{theorem}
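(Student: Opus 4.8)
The plan is to derandomize the randomized greedy $1/e$-approximation that underlies the $(1/e+0.004)$-algorithm mentioned above, following the same blueprint as for Theorem~\ref{thm:unconstrained}. Recall that this algorithm sets $S_0=\emptyset$ and, in each round $i=1,\dots,k$, forms a set $M_i$ of $k$ elements maximizing $\sum_{u\in M_i}f(u\mid S_{i-1})$ --- drawn from $\cN\setminus S_{i-1}$ together with dummy elements of zero marginal value whose addition does not change the state --- picks $u_i$ uniformly from $M_i$, and sets $S_i=S_{i-1}\cup\{u_i\}$. Its analysis (for $f\ge 0$, as usual for this problem) rests on two per-round facts: (i) since $M_i$ maximizes a sum of $k$ marginals and $|OPT\setminus S_{i-1}|\le k$, submodularity gives $\mathbb{E}[f(u_i\mid S_{i-1})\mid S_{i-1}]=\tfrac1k\sum_{u\in M_i}f(u\mid S_{i-1})\ge\tfrac1k\bigl(f(OPT\cup S_{i-1})-f(S_{i-1})\bigr)$; and (ii) since every fixed element is selected with probability at most $1/k$, the standard lemma bounding $f(T\cup R)$ when $R$ is a random set with small marginals gives $\mathbb{E}[f(OPT\cup S_i)\mid S_{i-1}]\ge(1-1/k)\,f(OPT\cup S_{i-1})$. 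With $a_i=\mathbb{E}[f(S_i)]$ and $b_i=\mathbb{E}[f(OPT\cup S_i)]$ these read $a_i-a_{i-1}\ge\tfrac1k(b_{i-1}-a_{i-1})$ and $b_i\ge(1-1/k)b_{i-1}$, so from $b_0=f(OPT)$ they telescope to $a_k\ge(1-1/k)^{k-1}f(OPT)\ge f(OPT)/e$.

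The derandomized algorithm maintains an explicit distribution $\cD_i$ over states (subsets of $\cN$), with $\cD_0$ the point mass on $\emptyset$, preserving: (I1) $\Pr_{S\sim\cD_i}[u\in S]\le\tau_i:=1-(1-1/k)^i$ for every $u\in\cN$; (I2) $\mathbb{E}_{S\sim\cD_i}[f(S)]\ge a_i$, the same bound the randomized analysis produces ($f(OPT)$ figures only symbolically in the proof, never in the algorithm); and (I3) $|\support(\cD_i)|=O(k)$. In round $i$ I would, for each $S\in\support(\cD_{i-1})$, query $f(u\mid S)$ for all $u\notin S$ --- this is where the $O(kn)$ queries per round, hence $O(k^2n)$ overall, come from --- and solve the LP with variables $x_{S,e}\ge0$ ($S\in\support(\cD_{i-1})$; $e$ ranging over $\cN\setminus S$ and over $k$ dummy elements of value $0$), constraints $\sum_e x_{S,e}=q_S$ and $x_{S,e}\le q_S/k$ (where $q_S=\cD_{i-1}(S)$), with objective $\max\sum_{S,e}x_{S,e}\,f(e\mid S)$. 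The feasible region is a product over $S$ of scaled capped simplices $\{x\ge 0:\sum_e x_e=q_S,\ x_e\le q_S/k\}$, every vertex of which has at most $k$ nonzero coordinates; so an optimal extreme point $x^\star$ moves the mass of each old state to at most $k$ successors. Let $\cD_i'$ add $x^\star_{S,e}$ to the weight of $S\cup\{e\}$. The per-state LP optimum equals $\tfrac{q_S}{k}\,g(S)$, where $g(S)$ is the sum of the $k$ largest values available at $S$ (hence $g(S)\ge 0$, and $g(S)\ge f(OPT\cup S)-f(S)$ by submodularity); summing over $S$ then gives $a_i-a_{i-1}=\sum_S\tfrac{q_S}{k}g(S)\ge\tfrac1k(b_{i-1}-a_{i-1})$, the cap $x^\star_{S,e}\le q_S/k$ lets the small-marginals lemma be applied state by state to the single added element to get $b_i\ge(1-1/k)b_{i-1}$, and $\Pr_{\cD_i'}[u\in S]\le\Pr_{\cD_{i-1}}[u\in S]+\tfrac1k\bigl(1-\Pr_{\cD_{i-1}}[u\in S]\bigr)\le\tau_i$ by (I1) for $\cD_{i-1}$ and monotonicity of $p\mapsto p+\tfrac1k(1-p)$. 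After round $k$ the algorithm returns $\argmax_{S\in\support(\cD_k)}f(S)$, which has value $\ge\mathbb{E}_{S\sim\cD_k}[f(S)]\ge a_k\ge f(OPT)/e$.

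The main obstacle is maintaining (I3): $\support(\cD_i')$ may be $k$ times larger than $\support(\cD_{i-1})$, so without intervention the support grows like $k^i$. The fix is a sparsification step --- replace $\cD_i'$ by an extreme point of the polytope of distributions supported on $\support(\cD_i')$ that still satisfy $\mathbb{E}[f(S)]\ge a_i$ and the marginal bounds $\Pr[u\in S]\le\tau_i$ (this polytope is nonempty, as it contains $\cD_i'$) --- together with an argument that such an extreme point is supported on only $O(k)$ sets. A naive extreme-point bound only limits the support by the number of constraints that can be tight at once, which a priori permits one tight marginal constraint per element occurring in the support; pushing this down to $O(k)$, using that each set in the support has size at most $k$ and a sharper look at which marginal constraints can be simultaneously tight, is the delicate step, and it is precisely the part that does not already follow from the argument behind Theorem~\ref{thm:unconstrained}.
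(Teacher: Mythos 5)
Your overall blueprint (explicit distribution, per-round LP, extreme points, and the two standard per-round facts of the randomized greedy analysis) matches the paper's, but there is a genuine gap exactly where you flag it: controlling the support size. Your round-$i$ LP is a product of capped simplices, one per state, and every vertex of $\{x\ge 0:\sum_e x_e=q_S,\ x_e\le q_S/k\}$ has exactly $k$ nonzero coordinates; so the extreme point you pick necessarily multiplies the support by a factor of $k$ each round, and the separate sparsification LP you then invoke to repair this is left unproven. The naive extreme-point bound for that sparsification polytope only gives support at most $n+2$ (one possibly tight marginal constraint per element, plus the normalization and the value constraint), which would yield roughly $O(kn^2)$ queries rather than the claimed $O(k^2 n)$, and you explicitly defer the ``delicate step'' of pushing this to $O(k)$ --- so the theorem as stated is not established. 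A smaller issue: the sparsification constraint $\bE[f(S)]\ge a_i$ is not computable, since $a_i$ is defined through $f(OPT)$; it would have to be replaced by preserving the current expectation $\bE_{\cD_i'}[f(S)]$ (an easy fix, but it should be said).

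The paper avoids the blowup entirely --- and hence needs no sparsification step --- by a different LP design, and this is the idea you are missing. First, a single candidate set $M_i$ of at most $k$ elements is chosen to maximize $\sum_{u\in M_i}\bE_{S\sim\cD_{i-1}}[f(u\mid S)]$ (one set for the whole distribution; this still dominates $OPT$ in the improvement lemma). Second, the capacity constraints are aggregated over the distribution rather than imposed per state: one constraint $\bE_{S\sim\cD_{i-1}}[x(u,S)]\le \nicefrac{1}{k}\cdot\Pr_{S\sim\cD_{i-1}}[u\notin S]$ for each $u\in M_i$, together with one normalization constraint $\sum_{u\in M_i}x(u,S)+\ell(S)=1$ per state. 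This coupled LP has only $k+|\cD_{i-1}|$ constraints, so any extreme point has at most $k+|\cD_{i-1}|$ nonzero variables in total (not per state): the support grows additively by $k$ per round, $|\cD_k|\le k^2+1$, and the rest of your analysis (comparing the LP optimum to the feasible solution $x(u,S)=\nicefrac{1}{k}\cdot\characteristic[u\notin S]$, the per-element probability bound $\Pr_{\cD_i}[u\notin S]\ge(1-\nicefrac{1}{k})^i$, and the small-marginals lemma) goes through essentially verbatim. Without this coupling, or a proof of your claimed $O(k)$ sparsification bound, your argument does not close.
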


Again, note that our deterministic algorithm makes $O(k^2 n)$ oracle queries, while the randomized algorithm only needs $O(kn)$ queries.

\subsection{Previous Results}
The literature on submodular maximization problems is very large, and therefore, we mention below only a few of the most relevant works.
Randomization is widely used in submodular maximization. In particular, many of the recent algorithms use an extension of submodular functions to fractional vectors known as the \emph{multilinear} extension (see, \eg, \cite{CCPV11,GV11,FNS11,KST13,BFNS14}). Any algorithm using this extension must be randomized since the only known way to (approximately) evaluate this extension is via random sampling. A few examples of randomized algorithms for submodular maximization that do not use the multilinear extension can be found in~\cite{DS06,AGR11,BFNS12,FW14}.

The first provable approximation algorithms for unconstrained submodular maximization were described by Feige et al.~\cite{FMV11}. Their best algorithms for the problem achieve a randomized approximation ratio of $0.4 - o(1)$, and a deterministic approximation ratio of $1/3 - \eps$ (where $\eps$ is an arbitrarily small positive constant). On the negative side, \cite{FMV11} showed that no algorithm has an approximation ratio of $1/2 + \eps$ for the problem. The randomized approximation ratio was improved gradually~\cite{GV11,FNS11b}, eventually leading to an optimal linear time $1/2$-approximation randomized algorithm given by~\cite{BFNS12}. However, the deterministic approximation ratio has not been improved since the work of~\cite{FMV11}. Interestingly,~\cite{BFNS12} described a different $1/3$-approximation deterministic algorithm for the problem, which led some to conjecture that no deterministic algorithm can do better. Huang and Borodin~\cite{HB14} strengthened this conjecture by showing that a large class of deterministic algorithms resembling the optimal $1/2$-approximation randomized algorithm of~\cite{BFNS12} cannot achieve $1/2$-approximation for unconstrained submodular maximization.

The problem of maximizing a (non-monotone) submodular function subject to a matroid independence constraint (which generalizes a cardinality constraint) was given a deterministic $(1/4 - \eps)$-approximation algorithm by Lee et al.~\cite{LMNS10} and a randomized $0.309$-approximation algorithm by Vondr\'{a}k~\cite{V13}. Later, the randomized approximation ratio was improved to $0.325$ using a simulated annealing technique~\cite{GV11}, and then to $1/e-o(1)$~\cite{FNS11} via an extension of the continuous greedy algorithm of~\cite{CCPV11}. All the above randomized results are based on the multilinear extension, and thus, are quite inefficient. Buchbinder et al.~\cite{BFNS14} described a simple randomized greedy algorithm designed specifically for a cardinality constraint, and used this algorithm to achieve a clean approximation ratio of $1/e$ with a significantly better time complexity. Additionally, \cite{BFNS14} also showed that $1/e$ is not the correct approximation ratio for the case of a cardinality constraint by describing an (inefficient) polynomial time $(1/e + 0.004)$-approximation algorithm for it. On the hardness side, \cite{GV11} showed that no polynomial time algorithm can have an approximation ratio better than $0.491$ for the problem of maximizing a (non-monotone) submodular function subject to a cardinality constraint.

Recent works consider online and streaming variants of the above problems~\cite{BFS15b,CGQ15} as well as faster algorithms~\cite{BFS15}.

\section{Additional Notation}

For every set $S$ and element $u$, we denote the union $S \cup \{u\}$ by $S + u$, and the expression $S \setminus \{u\}$ by $S - u$. Given a submodular function $f : 2^\cN \rightarrow \mathbb{R}$, the marginal contribution of $u$ to $S$ is denoted by $f(u \mid S) = f(S + u) - f(S)$.
Our algorithms {\bf explicitly} maintain in each iteration $i$ a (finite) distribution $\cD_i$ over possible states of the algorithm. Each distribution is represented as a multiset of tuples $\{(p, S)\}$, where $S$ is a state and $p$ is the probability of this state. Naturally, we require all the $p$-values to be positive and to add up to $1$. We denote by $|\cD_i|$ the number of tuples in the distribution $\cD_i$, and by $\support(\cD_{i})$ the set of states represented by these tuples (which is also the set of states having a positive probability).

To simplify the presentation of our algorithms, we allow our distributions to contain multiple tuples with the same state. Moreover, there might even be multiple identical tuples (this is why the distributions have to be multisets). Whenever this happens, the meaning is that the probability of a state $S$ is the sum of the $p$-values of tuples containing it. An implementation of our algorithm can either keep identical state tuples separate, or unify them. Our proofs are independent of such details. The pseudocode of our algorithms uses the first option, which requires us to assume the following semantic rules regarding multisets:
\begin{itemize}
	\item Given multisets $A$ and $B$ of tuples, the multiplicity of a tuple in the union $A \cup B$ is the sum of its multiplicities in the original sets.
	\item Given an expression of the from $\{(p(x), S(x)) \mid x \in A\}$, where $A$ is a set and $(p(x), S(x))$ is a tuple which is a function of an element $x \in A$. If there are multiple $x$ values resulting a single tuple $(p, S)$, then we assume that the multiplicity of this tuple is equal to the number of such $x$ values in $A$.
\end{itemize}

\section{Unconstrained Submodular Maximization}
In this section we present a deterministic $1/2$-approximation algorithm for the problem $\max\{f(S) : S\subseteq \cN\}$ whose formal description is given as Algorithm~\ref{alg:doub-greedy}. Each state in the distribution maintained by our algorithm is a pair $(X, Y)$ of sets. Technically, this means that the distribution should be a multiset of tuples $(p, (X, Y))$. However, we abuse notation and use tuples of the form $(p, X, Y)$ instead. Additionally, we write $\bE_{\cD_i}$ instead of $\bE_{(X, Y) \sim \cD_i}$ to avoid visual cluttering.

\begin{algorithm*}[!h]
\caption{\textsf{Deterministic Unconstrained}$(f)$} \label{alg:doub-greedy}
Initialize: $\cD_0=\{(1, \varnothing, \cN )\}$.\\
Denote the elements of $\cN$ by $u_1, u_2, \dotsc, u_n$, in an arbitrary order (recall that $n = |\cN|$).\\
\For{$i$ = $1$ \KwTo $n$}
{
    $\forall (X,Y)\in \support (\cD_{i-1})$, let $a_{i}(X)= f(X +u_i)-f(X)$, $b_{i}(Y)=f(Y - u_i)-f(Y)$.\\
    Find an {\bf extreme point} solution of the following linear formulation:
\[\begin{array}{lllll}
(P) & \bE_{\cD_{i - 1}}[z(X,Y) a_{i}(X) + w(X,Y) b_{i}(Y)] & \geq  2 \cdot \bE_{\cD_{i - 1}}[z(X,Y) b_{i}(Y)] \mspace{-240mu} & \\
& \bE_{\cD_{i - 1}}[z(X,Y) a_{i}(X) + w(X,Y) b_{i}(Y)] & \geq  2 \cdot \bE_{\cD_{i - 1}}[w(X,Y) a_{i}(X)]  \mspace{-190mu} &\\
&  z(X,Y) + w(X,Y) & = 1 & \forall (X,Y) \in \support(\cD_{i-1})\\
& z(X,Y),w(X,Y) & \geq 0 & \forall (X,Y) \in \support(\cD_{i-1})
\end{array}\]\\
Construct a new distribution: \label{line:construction}
\begin{align*}
	\cD_{i} \gets{} & \{(z(X,Y) \cdot {\textstyle \Pr_{\cD_{i - 1}}[(X,Y)]}, (X+u_i,Y)) \mid (X,Y) \in \support(\cD_{i-1}), z(X,Y) > 0\}\\
					& \cup \{(w(X,Y) \cdot {\textstyle \Pr_{\cD_{i - 1}}[(X,Y)]}, (X,Y-u_i)) \mid (X,Y) \in \support(\cD_{i-1}), w(X,Y) > 0\}
					\enspace.
\end{align*}
}
\Return{$\argmax_{(X,Y) \in \support(\cD_n)}\{f(X)\}$} (equivalent to $\argmax_{(X,Y) \in \support(\cD_n)}\{f(Y)\}$).
\end{algorithm*}

We begin the analysis of Algorithm~\ref{alg:doub-greedy} with the following simple observations.

\begin{observation} \label{obs:simple_unconstrained}
The following holds for every iteration $1 \leq i \leq n$ of Algorithm~\ref{alg:doub-greedy}:
\begin{enumerate}
\item For every state $(X,Y)\in \support(\cD_i)$, $X \cap \{u_{i + 1}, \ldots, u_n\} = \varnothing$, $\{u_{i + 1}, \ldots, u_n\} \subseteq Y$ and $X \cap \{u_{1}, \ldots, u_i\} = Y \cap \{u_{1}, \ldots, u_i\}$. \label{obs:valid_distribution21}
\item The total sum of the probabilities in $\cD_i$ is $1$, and thus, $\cD_i$ is a valid distribution. \label{obs:valid_distribution23}
\item The formulation $(P)$ is feasible. In particular, one feasible solution assigns for every state $(X,Y) \in \support(\cD_{i - 1})$:
\[
	z(X,Y) = \frac{\max\{0, a_i(X)\}}{\max\{0, a_i(X)\}+\max\{0, b_i(Y)\}}
	\quad \text{(or $1$ if the denominator is $0$)}
\]
and
\[
	w(X,Y) = 1-z(X,Y) = \frac{\max\{0, b_i(Y)\}}{\max\{0, a_i(X)\}+\max\{0, b_i(Y)\}}
	\quad \text{(or $0$ if the denominator is $0$)}
	\enspace.
\]
\label{obs:valid_assignment22}
\item For any extreme point of $(P)$ there are at most $2+|\cD_{i-1}|$ non-zero variables. Thus, $|\cD_i| \leq 2+|\cD_{i-1}|$, and $|\cD_n| \leq 2n + 1$. \label{obs:size_increase24}
\end{enumerate}
\end{observation}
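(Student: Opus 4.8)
\emph{Parts 1 and 2.} I would prove both structural claims simultaneously by induction on $i$, with base case $\cD_0 = \{(1,\varnothing,\cN)\}$: here $X=\varnothing$ and $Y=\cN$ satisfy the three set conditions trivially (note $\{u_1,\dots,u_0\}=\varnothing$) and the probabilities sum to $1$. For the inductive step fix $(X,Y)\in\support(\cD_{i-1})$; by the hypothesis $\{u_i,\dots,u_n\}$ is disjoint from $X$ and contained in $Y$, so in particular $u_i\notin X$ and $u_i\in Y$, hence the two states $(X+u_i,Y)$ and $(X,Y-u_i)$ created in Line~\ref{line:construction} genuinely add/remove $u_i$. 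Checking the three conditions for each of these two new states is then a direct computation, since the update only changes the membership of $u_i$, which on both coordinates moves consistently into (resp.\ out of) the range $\{u_1,\dots,u_i\}$. This gives Part~1. Part~2 follows because a state $(X,Y)$ with mass $p=\Pr_{\cD_{i-1}}[(X,Y)]$ contributes $z(X,Y)\cdot p+w(X,Y)\cdot p = p$ to the total mass of $\cD_i$ by the equality constraint $z(X,Y)+w(X,Y)=1$, and all retained probabilities are positive; summing over $\support(\cD_{i-1})$ yields total mass $1$.

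\emph{Part 3.} The displayed assignment plainly satisfies $z(X,Y),w(X,Y)\ge 0$ and $z(X,Y)+w(X,Y)=1$, so only the two inequalities of $(P)$ need attention. I would first establish the pointwise bound $a_i(X)+b_i(Y)\ge 0$ for every $(X,Y)\in\support(\cD_{i-1})$: Part~1 gives $X = X\cap\{u_1,\dots,u_{i-1}\} = Y\cap\{u_1,\dots,u_{i-1}\}\subseteq Y-u_i$, so $a_i(X)+b_i(Y)=f(u_i\mid X)-f(u_i\mid Y-u_i)\ge 0$ by diminishing returns applied to $X\subseteq Y-u_i$, $u_i\notin Y-u_i$. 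Given this, I claim each of the two inequalities of $(P)$ holds \emph{termwise}, i.e.\ $z(X,Y)a_i(X)+w(X,Y)b_i(Y)\ge 2z(X,Y)b_i(Y)$ for every state, and symmetrically with $z\leftrightarrow w$ and $a_i\leftrightarrow b_i$; since $\bE_{\cD_{i-1}}[\cdot]$ is a nonnegative-weighted sum over states, summing gives $(P)$. Writing $a=a_i(X)$, $b=b_i(Y)$ with $a+b\ge 0$, this is a short sign-case check: when $b\le 0$ one has $w=0,z=1$ and $a\ge -b\ge\max\{0,2b\}$; when $a\le 0<b$ one has $z=0,w=1$; and when $a,b>0$ the inequality becomes $\tfrac{a^2+b^2}{a+b}\ge\tfrac{2ab}{a+b}$, i.e.\ $(a-b)^2\ge 0$. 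Hence $(P)$ is feasible.

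\emph{Part 4.} The feasible region of $(P)$ is a polytope in $\bR^{2|\support(\cD_{i-1})|}$ — it is bounded since every variable lies in $[0,1]$ — and is nonempty by Part~3, so it has an extreme point. At an extreme point $2|\support(\cD_{i-1})|$ linearly independent constraints are tight; the $|\support(\cD_{i-1})|$ equality constraints $z(X,Y)+w(X,Y)=1$ are always among them, and at most $2$ of the remaining tight ones can be the two inequalities of $(P)$, so at least $|\support(\cD_{i-1})|-2$ of the non-negativity constraints are tight, i.e.\ at least that many variables vanish, leaving at most $|\support(\cD_{i-1})|+2$ non-zero variables. Each non-zero variable contributes exactly one tuple to $\cD_i$ in Line~\ref{line:construction}, so $|\cD_i|\le|\support(\cD_{i-1})|+2\le|\cD_{i-1}|+2$; unrolling from $|\cD_0|=1$ gives $|\cD_n|\le 2n+1$.

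\emph{Main obstacle.} The one place where care is genuinely required is threading the three-part set invariant of Part~1 through both update branches and then invoking submodularity in precisely the configuration $X\subseteq Y-u_i$ to obtain $a_i(X)+b_i(Y)\ge 0$; this inequality is exactly what makes the simple closed-form solution feasible for $(P)$ in Part~3, and everything else (the probability accounting and the extreme-point counting) is routine.
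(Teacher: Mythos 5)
Your proposal is correct and follows essentially the same route as the paper's proof: induction for the structural invariants and probability mass, the key submodularity inequality $a_i(X)+b_i(Y)\ge 0$ from $X\subseteq Y-u_i$ used to verify the closed-form assignment termwise (reducing to $(a-b)^2\ge 0$ in the all-nonnegative case), and bounding the non-zero variables of an extreme point by the $2+|\cD_{i-1}|$ non-sign constraints. Your explicit tight-constraint count in Part~4 (and the remark that the polytope is bounded, so an extreme point exists) just spells out what the paper invokes as a standard property.
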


\noindent \textbf{Remark:} Item~\ref{obs:size_increase24} of Observation~\ref{obs:simple_unconstrained} is the only place in the analysis of Algorithm~\ref{alg:doub-greedy} where we use the fact that the algorithm finds an extreme point solution of $(P)$.  Claim~\ref{cl-unconstraind-extreme} shows that one can in fact compute a solution for $(P)$ having at most $1+|\cD_{i-1}|$ non-zero variables. Moreover, it is possible to compute such a solution in near linear time, implying that implementation of Algorithm~\ref{alg:doub-greedy} does not require an LP solver.

\begin{proof}
The proof of the observation is by induction on $i$. Assume the observation holds for every $1 \leq i' < i$, and let us prove it for $i$.
Observe that item~(\ref{obs:valid_distribution21}) holds for $i - 1$ either by the induction hypothesis or (for $i = 1$) by the fact that $\cD_0$ trivially obeys it. Additionally, it is easy to see that every state of $\cD_i$ is obtained from a state of $\cD_{i - 1}$ by either adding $u_i$ to $X$ or removing $u_i$ from $Y$. Hence item~(\ref{obs:valid_distribution21}) is maintained.
Similarly, item~(\ref{obs:valid_distribution23}) holds for $i - 1$. Since $z(X,Y) + w(X,Y) = 1$ for every state $(X,Y)\in \support(\cD_{i-1})$, the distribution $\cD_i$ contains up two tuples resulting from $(X, Y)$, and these tuples exactly split the probability $(X, Y)$ has in $\cD_{i - 1}$. Thus, the sum of the probabilities in $\cD_i$ is equal to their sum in $\cD_{i - 1}$.

To see why item~(\ref{obs:valid_assignment22}) holds, observe first that by~(\ref{obs:valid_distribution21}) $X \subseteq Y\setminus \{u_i\}$ for each $(X,Y)\in \support(\cD_{i-1})$, and therefore, by submodularity,
\[a_i(X) + b_i(Y) = [f(X +u_i)-f(X)] + [f(Y - u_i)-f(Y)]\geq 0 \enspace.\]

Next, we prove that the first constraint of $(P)$ is satisfied by the assignment suggested by~(\ref{obs:valid_assignment22}). 
Proving that the second constraint of $(P)$ is satisfied by the assignment can be done similarly.

If $a_i(X)=b_i(Y)=0$ for some state $(X, Y)$, then $(X,Y)$ does not contribute to either side of the first constraint of $(P)$, and we may ignore it.
Thus, we may assume that either $a_i(X)$ or $b_i(Y)$ is strictly non-zero for every state. Plugging the assignment suggested by~(\ref{obs:valid_assignment22}) under this assumption into the first constraint in $(P)$, we get:
\begin{align*}
& \bE_{\cD_{i - 1}}[z(X,Y)\cdot a_i(X) + w(X,Y)\cdot b_i(Y)] \\
& =  \bE_{\cD_{i - 1}}\left[\frac{\max\{0, a_i(X)\}\cdot a_i(X)}{\max\{0, a_i(X)\}+\max\{0, b_i(Y)\}} + \frac{\max\{0, b_i(Y)\}\cdot b_i(Y)}{\max\{0, a_i(X)\}+\max\{0, b_i(Y)\}}\right]\\
& \geq 2 \cdot \bE_{\cD_{i - 1}}\left[\frac{\max\{0, a_i(X)\}\cdot  b_i(Y)}{\max\{0, a_i(X)\}+\max\{0, b_i(Y)\}}\right] = 2 \cdot \bE_{\cD_{i - 1}}[z(X,Y) \cdot b_i(Y)]
\enspace.
\end{align*}
The final inequality holds even without the expectation due to the following argument: if either $a_i(X)<0$ or $b_i(Y)<0$, then the LHS is non-negative while the RHS is non-positive.
On the other hand, if $a_i(X)\geq 0$ and $b_i(Y)\geq 0$ the inequality reduces to $a^2_i(X) + b^2_i(Y) \geq 2 a_i(X) b_i(Y)$, which clearly holds. 

Item~(\ref{obs:size_increase24}) follows immediately by the properties of an extreme point. Since $(P)$ has $2+|\cD_{i-1}|$ constraints, an extreme point of $(P)$ has at most $2+|\cD_{i-1}|$ non-zero variables. Since a single tuple is added to $\cD_i$ for every non-zero variable, the size of $\cD_i$ is upper bounded by $2+|\cD_{i-1}|$. 
\end{proof}

Let $OPT\subseteq \cN$ be the optimal solution for the problem $\{f(S) : S \subseteq \cN\}$ that we want to approximate, and let $OPT(X, Y)$ be a shorthand for the set $((OPT \cup X) \cap Y)$. The following is the main lemma we need in order to analyze Algorithm~\ref{alg:doub-greedy}.

\begin{lemma}\label{main-lem-double}
For any iteration $1 \leq i \leq n$ of Algorithm~\ref{alg:doub-greedy},
\[
{\textstyle \bE_{\cD_i}}[f(X)+f(Y)]-{\textstyle \bE_{\cD_{i - 1}}}[f(X)+f(Y)]
\geq 2 \cdot \left({\textstyle \bE_{\cD_{i - 1}}}[f(OPT(X, Y))] - {\textstyle \bE_{\cD_{i}}}[f(OPT(X, Y))]\right)
	\enspace.
\]
\end{lemma}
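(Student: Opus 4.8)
The statement compares the change in the potential $f(X)+f(Y)$ across one iteration with the change in $f(OPT(X,Y))$ across the same iteration. This is exactly the structure of the analysis of the randomized double greedy of \cite{BFNS12}, now carried out at the level of the explicit distribution. The plan is to fix an iteration $i$ and a state $(X,Y) \in \support(\cD_{i-1})$, and to analyze separately the contribution of this state to both sides of the inequality. Since $\cD_i$ is built by splitting the probability mass of each $(X,Y)$ into a $z(X,Y)$-fraction going to $(X+u_i,Y)$ and a $w(X,Y)$-fraction going to $(X,Y-u_i)$, every expectation over $\cD_i$ decomposes as a sum over $\support(\cD_{i-1})$ with the appropriate $z,w$ weights. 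So the whole inequality will follow by summing a per-state inequality against the distribution $\cD_{i-1}$, and the two constraints of $(P)$ are precisely what make that summed inequality hold.

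**The per-state quantities.** For a fixed state $(X,Y)$, note $f(X+u_i)-f(X) = a_i(X)$ and $f(Y-u_i)-f(Y) = b_i(Y)$, so the contribution of $(X,Y)$ to $\bE_{\cD_i}[f(X)+f(Y)] - \bE_{\cD_{i-1}}[f(X)+f(Y)]$ is $z(X,Y)\,a_i(X) + w(X,Y)\,b_i(Y)$ (times $\Pr_{\cD_{i-1}}[(X,Y)]$). For the $OPT$ term, I would observe that $OPT(X+u_i,Y) = OPT(X,Y) + u_i$ when $u_i \in OPT$, and equals $OPT(X,Y)$ when $u_i \notin OPT$; symmetrically $OPT(X,Y-u_i) = OPT(X,Y) - u_i$ when $u_i \notin OPT$ and equals $OPT(X,Y)$ when $u_i \in OPT$. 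Hence only one of the two branches actually changes $f(OPT(\cdot))$, and the decrease $f(OPT(X,Y)) - \bE[f(OPT(\text{new state}))]$ contributed by $(X,Y)$ is either $z(X,Y)\cdot[f(OPT(X,Y)) - f(OPT(X,Y)+u_i)]$ (if $u_i\in OPT$) or $w(X,Y)\cdot[f(OPT(X,Y)) - f(OPT(X,Y)-u_i)]$ (if $u_i \notin OPT$). The goal becomes, after multiplying the $OPT$-term by $2$: in the case $u_i \in OPT$, show $z\,a_i(X) + w\,b_i(Y) \geq 2z\cdot[f(OPT(X,Y)) - f(OPT(X,Y)+u_i)]$, and in the case $u_i \notin OPT$, show $z\,a_i(X)+w\,b_i(Y) \geq 2w\cdot[f(OPT(X,Y)) - f(OPT(X,Y)-u_i)]$.

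**Bounding the $OPT$ marginals by $a_i$ and $b_i$.** Here I would use submodularity together with the structural facts from Observation~\ref{obs:simple_unconstrained}\eqref{obs:valid_distribution21}, which give $X \subseteq OPT(X,Y) \subseteq Y - u_i$ (using $u_i \notin X$ and $u_i \in Y$ at this stage). When $u_i \in OPT$: the element $u_i$ is not in $OPT(X,Y)$ but is in $OPT(X,Y)+u_i \subseteq Y$; by submodularity the marginal $f(OPT(X,Y)+u_i) - f(OPT(X,Y)) \geq f(Y) - f(Y-u_i) = -b_i(Y)$, so $f(OPT(X,Y)) - f(OPT(X,Y)+u_i) \leq b_i(Y)$, and thus it suffices to prove $z\,a_i(X) + w\,b_i(Y) \geq 2z\,b_i(Y)$ --- which is exactly the first constraint of $(P)$ at the level of this state, once summed against $\cD_{i-1}$. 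When $u_i \notin OPT$: now $u_i \notin OPT(X,Y)$ and removing it does nothing, so the relevant comparison instead uses $OPT(X,Y) \supseteq X$; adding $u_i$ to the smaller set $X$ has marginal at least that of adding it to $OPT(X,Y)$... I would instead bound $f(OPT(X,Y)) - f(OPT(X,Y)-u_i)$ — but $u_i\notin OPT(X,Y)$ so this is $0$; the real work is that the $w$-branch moves to $(X,Y-u_i)$ whose $OPT$ set loses $u_i$ only if $u_i \in Y\setminus OPT$... one checks $OPT(X,Y-u_i)=OPT(X,Y)$ exactly when $u_i\notin OPT$, so the $OPT$ decrease is actually zero on the $w$-branch and nonzero $= z\cdot[f(OPT(X,Y))-f(OPT(X,Y)+u_i)]$ on the $z$-branch when $u_i\notin OPT$; bounding $f(OPT(X,Y)) - f(OPT(X,Y)+u_i) \le a_i(X)$ via submodularity ($X\subseteq OPT(X,Y)$, $u_i\notin OPT(X,Y)$ gives $f(OPT(X,Y)+u_i)-f(OPT(X,Y))\ge f(X+u_i)-f(X)=a_i(X)$) reduces the needed inequality to $z\,a_i(X)+w\,b_i(Y)\ge 2z\,a_i(X)$, i.e. $w\,b_i(Y)\ge z\,a_i(X)$, which summed against $\cD_{i-1}$ is implied jointly by the two $(P)$ constraints. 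I need to re-sort the two cases carefully against which constraint of $(P)$ is used.

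**The main obstacle.** The genuinely delicate point is the bookkeeping: matching each of the two cases ($u_i \in OPT$ versus $u_i \notin OPT$) with the correct branch ($z$ or $w$), correctly identifying when $f(OPT(X,Y))$ changes, and checking that after summing against $\cD_{i-1}$ the resulting averaged inequalities are exactly the two constraints of $(P)$ (possibly after using the submodularity inequality $a_i(X)+b_i(Y)\ge 0$ from Observation~\ref{obs:simple_unconstrained}\eqref{obs:valid_assignment22} to handle mixed-sign terms inside the expectation, since the per-state inequality need not hold pointwise — only in expectation). In other words, the per-state inequalities may fail for individual $(X,Y)$, and one must argue, as in Observation~\ref{obs:simple_unconstrained}, that the bad states are dominated in expectation thanks to the LP constraints. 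Assembling these expectation-level arguments cleanly, rather than the individual submodularity estimates, is the crux.
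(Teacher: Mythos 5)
Your overall plan (decompose both sides per state of $\cD_{i-1}$, bound the change in $f(OPT(X,Y))$ by $b_i(Y)$ or $a_i(X)$ via submodularity, and then invoke the two constraints of $(P)$) is the paper's argument, but the execution contains a genuine error in exactly the bookkeeping you flag as delicate, and as written one of your two cases does not go through. First, the case identification is swapped: since $u_i \notin X$ and $u_i \in Y$ for every $(X,Y) \in \support(\cD_{i-1})$, we have $OPT(X+u_i,Y) = OPT(X,Y)+u_i$, which differs from $OPT(X,Y)$ precisely when $u_i \notin OPT$, and $OPT(X,Y-u_i) = OPT(X,Y)-u_i$, which differs from $OPT(X,Y)$ precisely when $u_i \in OPT$. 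So the $OPT$-decrease sits on the $z$-branch when $u_i \notin OPT$ and on the $w$-branch when $u_i \in OPT$ --- the opposite of your second paragraph. Second, in the case $u_i \notin OPT$ you apply submodularity backwards: from $X \subseteq OPT(X,Y)$ and $u_i \notin OPT(X,Y)$, diminishing returns gives $f(OPT(X,Y)+u_i)-f(OPT(X,Y)) \leq f(X+u_i)-f(X) = a_i(X)$, not $\geq$; the correct bound here uses the \emph{other} containment $OPT(X,Y) \subseteq Y-u_i$, giving $f(OPT(X,Y)) - f(OPT(X,Y)+u_i) \leq f(Y-u_i)-f(Y) = b_i(Y)$, which after taking expectations is matched by the \emph{first} constraint of $(P)$. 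Your reversed bound leads you to require $\bE_{\cD_{i-1}}[w\, b_i(Y)] \geq \bE_{\cD_{i-1}}[z\, a_i(X)]$, and this is \emph{not} implied by $(P)$: a single state with $a_i(X)=2$, $b_i(Y)=1$, $z=\nicefrac{2}{3}$, $w=\nicefrac{1}{3}$ satisfies both constraints of $(P)$, yet $\bE[z\,a_i(X)] = \nicefrac{4}{3} > \nicefrac{1}{3} = \bE[w\,b_i(Y)]$. The genuine case $u_i \in OPT$ (decrease on the $w$-branch, bounded by $a_i(X)$ via $X \subseteq OPT(X,Y)-u_i$, matched with the \emph{second} constraint of $(P)$) is never correctly carried out in your write-up.

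A smaller point: your closing worry that ``the per-state inequalities may fail for individual $(X,Y)$ and the bad states must be dominated in expectation'' is misplaced. The submodularity bounds above hold pointwise for every state; the only expectation-level ingredients are the two constraints of $(P)$ themselves, and they are \emph{exactly} the inequalities needed after summing against $\cD_{i-1}$ (together with the identity $\bE_{\cD_i}[f(X)+f(Y)]-\bE_{\cD_{i-1}}[f(X)+f(Y)] = \bE_{\cD_{i-1}}[z\,a_i(X)+w\,b_i(Y)]$). No additional averaging or domination argument is required --- only the correct pairing of cases with branches and constraints, which is precisely the part you left unresolved.
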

\begin{proof}
Observe that whenever $u_i \notin OPT$:
\begin{align*}
{\textstyle \bE_{\cD_{i - 1}}}[f(OPT(X, Y))] - {\textstyle \bE_{\cD_{i}}}&[f(OPT(X, Y))]\\
& = {\textstyle \bE_{ \cD_{i - 1}}}[z(X,Y) \cdot \left(f(OPT(X, Y)) - f(OPT(X + u_i, Y))\right)]\\
& \leq {\textstyle \bE_{\cD_{i-1}}}[z(X,Y) \cdot \left(f(Y - u_i)-f(Y)\right)]
= {\textstyle \bE_{\cD_{i-1}}}[z(X,Y) \cdot b_i(Y)]
\enspace,
\end{align*}
where the inequality follows by submodularity since $OPT(X, Y) \subseteq Y - u_i$. A similar argument can be used to show that whenever $u_i \in OPT$:
\[
	{\textstyle \bE_{\cD_{i - 1}}}[f(OPT(X, Y))] - {\textstyle \bE_{\cD_{i}}}[f(OPT(X, Y))]
	\leq
	{\textstyle \bE_{\cD_{i-1}}}[w(X,Y) \cdot a_i(X)]
	\enspace.
\]

The lemma now follows by combing the above observations with the next inequality:
\begin{align*}
{\textstyle \bE_{\cD_i}}[f(X)+f(Y)&]-{\textstyle \bE_{\cD_{i - 1}}}[f(X)+f(Y)] \\
={} & {\textstyle \bE_{\cD_{i - 1}}}[z(X,Y) \cdot \left(f(X +u_i)-f(X)\right) + w(X,Y) \cdot \left(f(Y - u_i)-f(Y)\right)] \\
={} & {\textstyle \bE_{\cD_{i - 1}}}[\left(z(X,Y) \cdot a_i(X) + w(X,Y) \cdot b_i(Y)\right)] \\
\geq{} & 2 \cdot \max \left\{{\textstyle \bE_{\cD_{i-1}}}[z(X,Y) \cdot b_i(Y)],  {\textstyle \bE_{\cD_{i-1}}}[w(X,Y) \cdot a_i(X)]\right\}
\enspace,
\end{align*}
where the inequality follows by the constraints of $(P)$.
\end{proof}

We can now prove the next theorem, which implies Theorem~\ref{thm:unconstrained}.

\begin{theorem}
Algorithm~\ref{alg:doub-greedy} is a $\frac{1}{2}$-approximation algorithm performing $O(n^2)$ value oracle queries.
\end{theorem}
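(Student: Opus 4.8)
The plan is to handle the two assertions separately: the query count is immediate from Observation~\ref{obs:simple_unconstrained}, and the approximation ratio follows by telescoping Lemma~\ref{main-lem-double}.

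For the query complexity, note that in iteration $i$ the algorithm touches the oracle only when computing $a_i(X)=f(X+u_i)-f(X)$ and $b_i(Y)=f(Y-u_i)-f(Y)$ over the states $(X,Y)\in\support(\cD_{i-1})$, which costs $O(|\cD_{i-1}|)$ queries; by item~\ref{obs:size_increase24} of Observation~\ref{obs:simple_unconstrained} we have $|\cD_{i-1}|\le 2i-1$, so iteration $i$ uses $O(i)$ queries and the whole loop uses $\sum_{i=1}^n O(i)=O(n^2)$. Finding an extreme point of $(P)$ needs no oracle access (and, by the Remark, does not even require an LP solver), and the final $\argmax$ only evaluates $f$ on the $O(n)$ states of $\cD_n$. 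Hence the total is $O(n^2)$.

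For the approximation ratio I would sum the inequality of Lemma~\ref{main-lem-double} over $i=1,\dots,n$. Both sides telescope, giving $\bE_{\cD_n}[f(X)+f(Y)]-\bE_{\cD_0}[f(X)+f(Y)]\ge 2\bigl(\bE_{\cD_0}[f(OPT(X,Y))]-\bE_{\cD_n}[f(OPT(X,Y))]\bigr)$. Since $\cD_0=\{(1,\varnothing,\cN)\}$ and $OPT(\varnothing,\cN)=(OPT\cup\varnothing)\cap\cN=OPT$, the $\cD_0$ expectations are $f(\varnothing)+f(\cN)$ and $f(OPT)$. The crucial point is that applying item~\ref{obs:valid_distribution21} of Observation~\ref{obs:simple_unconstrained} with $i=n$ forces every $(X,Y)\in\support(\cD_n)$ to satisfy $X=Y$ (both are subsets of $\cN=\{u_1,\dots,u_n\}$ that agree on all of $\cN$), so $OPT(X,Y)=(OPT\cup X)\cap X=X=Y$ and therefore $\bE_{\cD_n}[f(OPT(X,Y))]=\bE_{\cD_n}[f(X)]=\bE_{\cD_n}[f(Y)]$. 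Writing $E$ for this common value, the telescoped inequality becomes $2E-f(\varnothing)-f(\cN)\ge 2f(OPT)-2E$, i.e. $4E\ge 2f(OPT)+f(\varnothing)+f(\cN)$. Using non-negativity of $f$ (implicit in the notion of approximation ratio, and the standing assumption of this line of work) we get $f(\varnothing)+f(\cN)\ge 0$, hence $E\ge\tfrac12 f(OPT)$. Since the algorithm returns the state of $\support(\cD_n)$ maximizing $f(X)$, and $\cD_n$ is a genuine distribution of total mass $1$ by item~\ref{obs:valid_distribution23} of Observation~\ref{obs:simple_unconstrained}, the returned value is at least the average $E\ge\tfrac12 f(OPT)$.

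The argument is essentially routine once Lemma~\ref{main-lem-double} is in hand; the one spot needing care — the ``main obstacle'', such as it is — is recognizing that $\cD_n$ is supported on pairs with $X=Y$, which is precisely what makes the end-of-run potential $\bE_{\cD_n}[f(OPT(X,Y))]$ coincide with the output quantity $\bE_{\cD_n}[f(X)]$, thereby converting the per-iteration exchange inequality into a global $1/2$ guarantee, exactly as in the randomized double-greedy analysis that this algorithm derandomizes. I would also check that the degenerate cases (vanishing denominators in the suggested LP solution, and total probability mass) are already handled by Observation~\ref{obs:simple_unconstrained}, so that all expectations above are well defined and the telescoping is legitimate.
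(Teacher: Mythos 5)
Your proposal is correct and follows essentially the same route as the paper: telescoping Lemma~\ref{main-lem-double}, using item~\ref{obs:valid_distribution21} of Observation~\ref{obs:simple_unconstrained} to conclude $X=Y$ on $\support(\cD_n)$ so that $OPT(X,Y)=X$, then applying the averaging argument together with non-negativity of $f$, and bounding the queries per iteration by $O(|\cD_{i-1}|)=O(i)$ via item~\ref{obs:size_increase24}. No gaps to report.
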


\begin{proof}
Adding up Lemma \ref{main-lem-double} over $1 \leq i \leq n$ we get:
\[
	{\textstyle \bE_{\cD_n}}[f(X)+f(Y)]- {\textstyle \bE_{\cD_0}}[f(X)+f(Y)] \geq 2 \cdot \left({\textstyle \bE_{\cD_0}}[f(OPT(X, Y))] - {\textstyle \bE_{\cD_n}}[f(OPT(X, Y))]\right)
	\enspace.
\]
The single state in the support of $\cD_0$ is $(\varnothing, \cN)$. Hence, ${\textstyle \bE_{\cD_0}}[f(OPT(X, Y))] = f(OPT(\varnothing, \cN)) = f(OPT)$. On the other hand, for every state $(X_n,Y_n) \in \support(\cD_{n})$ we have $X_n=Y_n$ by Observation~\ref{obs:simple_unconstrained}, and thus, $OPT(X_n, Y_n) = X_n=Y_n$. Plugging all these observations into the last inequality gives:
\[
	2 \cdot {\textstyle \bE_{\cD_n}}[f(X)]- \left(f(\varnothing)+f(\cN)\right) \geq 2 \cdot \left(f(OPT) - {\textstyle \bE_{\cD_n}}[f(X)]\right)
	\enspace.
\]
Using an averaging argument and the non-negativity of $f$ we now get:
\[
	\argmax_{(X,Y) \in \support(\cD_n)}\{f(X)\}
	\geq
	{\textstyle \bE_{\cD_n}}[f(X)]
	\geq
	\frac{f(OPT)}{2} + \frac{f(\varnothing)+f(\cN)}{4}
	\geq
	\frac{f(OPT)}{2}
	\enspace.
\]

Observation~\ref{obs:simple_unconstrained} shows that $|\cD_{i}|\leq 2i+1$ for every $1 \leq i \leq n$. Since Algorithm~\ref{alg:doub-greedy} performs $2$ oracle queries for every state in $\support(\cD_{i - 1})$, the number of such queries done during the $i$-th iteration is at most $4i -2$. Adding up the last bound over all iterations we get a bound $O(n^2)$ on the total number of oracle queries made by the algorithm.
\end{proof} 
\section{Cardinality Constraints}

In this section we present a deterministic $1/e$-approximation algorithm for the problem $\max\{f(S) : |S| \leq k\}$ whose formal description is given as Algorithm~\ref{alg:DetRandGreedy}. Each state in the distribution maintained by our algorithm is a set $S$.



\begin{algorithm*}[!t]
\caption{\textsf{Deterministic Cardinality}$(f, k)$} \label{alg:DetRandGreedy}
Initialize: $\cD_0=\{(1, \varnothing)\}$.\\
\For{$i$ = $1$ \KwTo $k$}
{
    Let $M_i\subseteq\cN$ be a subset of at most $k$ elements maximizing $\sum_{u \in M_i} \bE_{S \sim \cD_{i - 1}}[f(u \mid S)]$.\\
    Find an {\bf optimal extreme point} solution of the following linear formulation:
\[\begin{array}{lllll}
	(P) & \max & \sum_{u\in M_i} \bE_{S \sim \cD_{i - 1}}[x(u, S) \cdot f(u \mid S)] \mspace{-72mu}&& \\
	&& \bE_{S \sim \cD_{i - 1}}[x(u, S)] & \leq \nicefrac{1}{k} \cdot \Pr_{S \sim \cD_{i - 1}}[u \not \in S] & \forall u\in M_i\\
&& \sum_{u\in M_i} x(u, S) + \ell(S) & = 1 & \forall S \in \support(\cD_{i-1}) \\
&&x(u, S), \ell(S) & \geq 0 & \forall u \in M_i, S \in \support(\cD_{i-1})
\end{array}\]\\
Construct a new distribution:
\begin{align*}
	\cD_{i} \gets{} & \{(x(u, S) \cdot {\textstyle \Pr_{\cD_{i - 1}}[S]}, S+u) \mid u \in M_i, S \in \support(\cD_{i-1}), x(u, S) > 0\}\\
					& \cup \{(\ell(S) \cdot {\textstyle \Pr_{\cD_{i - 1}}[S]}, S) \mid S \in \support(\cD_{i-1}), \ell(S) > 0\}
					\enspace.
\end{align*}
}
Return $\argmax_{S \in \support(\cD_k)}\{f(S_k)\}$.
\end{algorithm*}

We first make the following simple observations.

\begin{observation} \label{obs:simple}
The following holds for every iteration $i=1, \dotsc, k$ of Algorithm~\ref{alg:DetRandGreedy}:
\begin{enumerate}
\item The assignment $x(u, S) = \nicefrac{1}{k} \cdot \characteristic[u \not \in S]$ and $\ell(S) = 1 - |M_i \setminus S| / k$ for every $S \in \support(\cD_{i - 1})$ and $u \in M_i$ is a feasible assignment for the formulation $(P)$. \label{obs:valid_assignment}
\item The total sum of the probabilities in $\cD_i$ is $1$, and thus, $\cD_i$ is a valid distribution. \label{obs:valid_distribution}
\item $|\cD_i| \leq k+|\cD_{i-1}|$. Thus, $|\cD_k| \leq k^2 + 1$. \label{obs:size_increase}
\end{enumerate}
\end{observation}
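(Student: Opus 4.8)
The plan is to establish the three items by induction on $i$, following the same template as the proof of Observation~\ref{obs:simple_unconstrained}; the base case is immediate since $\cD_0 = \{(1,\varnothing)\}$ obeys all three claims trivially.

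For item~(\ref{obs:valid_assignment}) I would simply substitute the proposed assignment into $(P)$. With $x(u,S) = \nicefrac{1}{k}\cdot\characteristic[u\notin S]$, the left-hand side of the first family of constraints becomes $\bE_{S\sim\cD_{i-1}}[x(u,S)] = \nicefrac1k\cdot\Pr_{S\sim\cD_{i-1}}[u\notin S]$, so that constraint holds with equality. For the second family, $\sum_{u\in M_i}x(u,S) = \nicefrac1k\cdot|M_i\setminus S|$, hence the choice $\ell(S) = 1 - |M_i\setminus S|/k$ makes $\sum_{u\in M_i}x(u,S)+\ell(S)=1$. Non-negativity of the $x$-variables is obvious, and $\ell(S)\ge 0$ follows from $|M_i\setminus S|\le|M_i|\le k$ — this is the only place where the bound $|M_i|\le k$ is used. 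Consequently $(P)$ is feasible, so an optimal extreme point exists.

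For item~(\ref{obs:valid_distribution}) I would use the inductive hypothesis that the $p$-values of $\cD_{i-1}$ sum to $1$. In the construction of $\cD_i$, a state $S\in\support(\cD_{i-1})$ of probability $\Pr_{\cD_{i-1}}[S]$ is replaced by the tuples with $p$-values $x(u,S)\cdot\Pr_{\cD_{i-1}}[S]$ over $u\in M_i$ with $x(u,S)>0$, together with $\ell(S)\cdot\Pr_{\cD_{i-1}}[S]$ when $\ell(S)>0$; by the second constraint of $(P)$ these add up to $\bigl(\sum_{u\in M_i}x(u,S)+\ell(S)\bigr)\Pr_{\cD_{i-1}}[S] = \Pr_{\cD_{i-1}}[S]$. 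Summing over $S$, the total mass of $\cD_i$ equals that of $\cD_{i-1}$, which is $1$; all retained $p$-values are positive, so $\cD_i$ is a valid distribution.

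For item~(\ref{obs:size_increase}) I would invoke the standard fact that an extreme point (basic feasible solution) of a linear program has at most as many strictly positive coordinates as the program has constraints other than the non-negativity constraints. The formulation $(P)$ has $|M_i|$ constraints of the first family and $|\support(\cD_{i-1})|$ equality constraints of the second family, so an optimal extreme point has at most $|M_i|+|\support(\cD_{i-1})| \le k + |\cD_{i-1}|$ non-zero variables among the $x(u,S)$ and $\ell(S)$; since the construction adds exactly one tuple per non-zero variable, $|\cD_i|\le k+|\cD_{i-1}|$, and unrolling this recursion from $|\cD_0|=1$ gives $|\cD_k|\le k^2+1$. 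The constraint-checking in the first two items is entirely routine; the only point needing a little care is this counting — one must correctly identify that the two families contribute $|M_i|+|\support(\cD_{i-1})|$ relevant constraints (the non-negativity constraints do not count), and note that introducing slack variables for the $|M_i|$ inequalities leaves the number of equality rows, and hence the vertex-support bound, unchanged. Beyond that there is no genuine obstacle: this observation is the cardinality-constraint analogue of Observation~\ref{obs:simple_unconstrained}, and its proof reuses the same template.
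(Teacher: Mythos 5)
Your proposal is correct and follows essentially the same route as the paper's proof: induction on $i$, direct substitution of the stated assignment into the constraints of $(P)$ for item~(\ref{obs:valid_assignment}), the equality constraint $\sum_{u\in M_i}x(u,S)+\ell(S)=1$ to conserve total probability for item~(\ref{obs:valid_distribution}), and the bound of at most $k+|\cD_{i-1}|$ constraints together with the extreme-point support bound for item~(\ref{obs:size_increase}). You merely spell out a few details the paper leaves as ``easy to verify,'' such as $\ell(S)\ge 0$ following from $|M_i|\le k$.
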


\begin{proof}
The proof of the observation is by induction on $i$. Assume the observation holds for every $1 \leq i' < i$, and let us prove it for $i$. It is easy to verify that item~\eqref{obs:valid_assignment} holds given that $\cD_{i - 1}$ is a valid distribution. To see why item~\eqref{obs:valid_distribution} holds, observe that the sum of the probabilities in $\cD_i$ is:
\[
	\sum_{S \in \support(\cD_{i-1})} \mspace{-18mu} {\textstyle \Pr_{\cD_{i - 1}}[S]} \cdot \left[\sum_{u \in M_i} x(u, S) + \ell(S) \right]
	=
	\sum_{S \in \support(\cD_{i-1})} {\textstyle \Pr_{\cD_{i - 1}}[S]}
	=
	1
	\enspace.
\]
Finally, to prove item~\eqref{obs:size_increase} notice that the number of constraints in $(P)$ at iteration $i$ is at most $k + |\support(\cD_{i - 1})| \leq k+|\cD_{i-1}|$. By the properties of extreme point solutions the total number of variables that are strictly greater than zero is upper bounded by the number of (tight) constraints. Since a single set is added to $\cD_i$ for every non-zero $x(u, S)$ or $\ell(S)$ variable, the size of $\cD_i$ is also upper bounded by $k+|\cD_{i-1}|$.
\end{proof}

The next lemma upper bounds the probability of an item to be in a set chosen according to the distributions defined by Algorithm~\ref{alg:DetRandGreedy}.

\begin{lemma}\label{lem:upper-prob}
For every element $u \in \cN$ and $0 \leq i \leq k$:
\[
	{\textstyle \Pr_{S \sim \cD_i}}[u \not \in S] \geq \left(1-\frac{1}{k}\right)^i
	\enspace.
\]
\end{lemma}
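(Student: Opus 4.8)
The plan is to prove Lemma~\ref{lem:upper-prob} by induction on $i$, tracking how the probability that $u \notin S$ evolves from $\cD_{i-1}$ to $\cD_i$. The base case $i = 0$ is immediate, since $\cD_0$ consists of the single state $\varnothing$, so $\Pr_{S \sim \cD_0}[u \notin S] = 1 = (1 - 1/k)^0$. For the inductive step, assume $\Pr_{S \sim \cD_{i-1}}[u \notin S] \geq (1 - 1/k)^{i-1}$ and consider iteration $i$.

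The key observation is that a state $S \in \support(\cD_{i-1})$ with $u \notin S$ contributes to states not containing $u$ in $\cD_i$ except for the single fractional piece that adds $u$ itself, namely the tuple coming from $x(u, S)$. Concretely, from the construction of $\cD_i$ in Algorithm~\ref{alg:DetRandGreedy}, the probability mass that ``loses'' $u$ (i.e.\ moves from a $u$-free state to a state containing $u$) is exactly
\[
	\sum_{S \in \support(\cD_{i-1}),\ u \notin S} {\textstyle \Pr_{\cD_{i-1}}[S]} \cdot x(u, S) = {\textstyle \bE_{S \sim \cD_{i - 1}}}[x(u, S) \cdot \characteristic[u \notin S]],
\]
and no mass is ever added to states not containing $u$ from states that did contain $u$ (adding an element never removes $u$). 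Hence
\[
	{\textstyle \Pr_{S \sim \cD_i}}[u \notin S] = {\textstyle \Pr_{S \sim \cD_{i-1}}}[u \notin S] - {\textstyle \bE_{S \sim \cD_{i - 1}}}\!\left[x(u, S) \cdot \characteristic[u \notin S]\right].
\]
Now I use the first constraint of $(P)$: for every $u \in M_i$ we have $\bE_{S \sim \cD_{i-1}}[x(u,S)] \leq \nicefrac{1}{k} \cdot \Pr_{S \sim \cD_{i-1}}[u \notin S]$. Since $x(u,S) \geq 0$, we also have $\bE[x(u,S) \cdot \characteristic[u \notin S]] \leq \bE[x(u,S)]$, so the mass lost is at most $\nicefrac{1}{k} \cdot \Pr_{S \sim \cD_{i-1}}[u \notin S]$. (If $u \notin M_i$, then no tuple of the form $(\cdot, S+u)$ is ever created for this $u$, so $\Pr_{S \sim \cD_i}[u \notin S] = \Pr_{S \sim \cD_{i-1}}[u \notin S]$, which only helps.) Therefore
\[
	{\textstyle \Pr_{S \sim \cD_i}}[u \notin S] \geq \left(1 - \frac1k\right) {\textstyle \Pr_{S \sim \cD_{i-1}}}[u \notin S] \geq \left(1 - \frac1k\right)\left(1 - \frac1k\right)^{i-1} = \left(1 - \frac1k\right)^i,
\]
using the induction hypothesis and $1 - 1/k \geq 0$ in the last step, completing the induction.

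I expect the only real subtlety to be the bookkeeping in the first displayed identity: one must argue carefully that the construction of $\cD_i$ partitions the probability mass of each state $S$ of $\cD_{i-1}$ (via $\sum_{u \in M_i} x(u,S) + \ell(S) = 1$) and that among the resulting child states, the only one containing $u$ that did not already is the $x(u,S)$-piece, while states $S$ with $u \in S$ pass all their mass to states still containing $u$. This is a routine but slightly fiddly accounting argument given the multiset semantics fixed in Section~2, and once it is in place the rest follows immediately from the LP constraint and the induction hypothesis.
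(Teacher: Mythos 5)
Your proof is correct and follows essentially the same route as the paper's: induction on $i$, with the exact accounting that the only probability mass moving from $u$-free states to $u$-containing states is $\bE_{S \sim \cD_{i-1}}[x(u,S) \cdot \characteristic[u \notin S]]$, which is then bounded via the first constraint of $(P)$ (the paper streamlines this by simply defining $x(u,S)=0$ for $u \notin M_i$ and subtracting $\bE_{S \sim \cD_{i-1}}[x(u,S)]$ over all $S$, giving an inequality where you keep an identity). No gaps; the bookkeeping step you flag as the subtle point is exactly the step the paper also carries out.
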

\begin{proof}
The proof of the lemma is by induction on $i$. The distribution $\cD_0$ gives a probability $1$ to the empty set, and thus, $\Pr_{S \sim \cD_0}[u \not \in S] = 1$ for every $u \in \cN$, \ie, the base case $i = 0$ holds. Next, assume the lemma holds for $0 \leq i - 1$, and let us prove it for $i$.

For simplicity of notation, let us define $x(u, S)$ to be $0$ for every $u \not \in M_i$. A set $S \in \support(\cD_i)$ contains the element $u$ in two cases: if it is constructed from a set in the support of $\cD_{i - 1}$ that contains $u$, or it is constructed by adding $u$ to a set in the support of $\cD_{i - 1}$. Using this observation we get the bound:
\begin{align*}
	{\textstyle \Pr_{S \sim \cD_i}}[u \not \in S]
	={} &
	\sum_{\substack{S \in \support(\cD_{i - 1}) \\ u \not \in S}} \mspace{-18mu} {\textstyle \Pr_{\cD_{i - 1}}[S]} \cdot \left[\sum_{u' \in M_i - u} x(u', S) + \ell(S) \right]\\
	\geq{} &
	\sum_{\substack{S \in \support(\cD_{i - 1}) \\ u \not \in S}} \mspace{-18mu} {\textstyle \Pr_{\cD_{i - 1}}[S]} \cdot \left[\sum_{u' \in M_i} x(u', S) + \ell(S) \right] - \sum_{S \in \support(\cD_{i - 1})} \mspace{-18mu} x(u, S) \cdot {\textstyle \Pr_{\cD_{i - 1}}[S]}\\
	={} &
	{\textstyle \Pr_{S \sim \cD_{i - 1}}[u \not \in S]} - {\textstyle \bE_{S \sim \cD_{i - 1}}[x(u, S)]}
	\geq
	(1 - \nicefrac{1}{k}) \cdot {\textstyle \Pr_{S \sim \cD_{i - 1}}[u \not \in S]}
	\geq
	\left(1-\frac{1}{k}\right)^i
	\enspace,
\end{align*}
where the second equality holds by the second constraint of $(P)$, the second inequality holds by the first constraint of $(P)$ and the last inequality holds by the induction hypothesis.
\end{proof}

The following lemma is an immediate implication of Lemma~2.2 of~\cite{BFNS14}. For completeness, we give an independent proof of it Appendix~\ref{app:omitted}.

\begin{lemma}\label{lem:decrease}
For any subset $T$ and a distribution $\cD$,
\[
	{\textstyle \bE_{S \sim \cD}}[f(T \cup S)] \geq f(S) \cdot \min_{u \in N} {\textstyle \Pr_{S \sim \cD}}[u \not \in S]
	\enspace.
\]
\end{lemma}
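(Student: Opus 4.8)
First I would point out that the right-hand side must read $f(T)$ rather than $f(S)$, since $S$ is bound by the expectation; so the target inequality is $\bE_{S \sim \cD}[f(T \cup S)] \geq f(T) \cdot \min_{u \in \cN} \Pr_{S \sim \cD}[u \notin S]$. The plan is to reduce this to the statement that for a non-negative submodular $g$ and a random set $\cR$ with $\Pr[u \in \cR] \le p$ for every $u$ one has $\bE[g(\cR)] \ge (1-p)\,g(\varnothing)$ --- which is exactly Lemma~2.2 of~\cite{BFNS14}. Concretely, set $g(R) := f(T \cup R)$; this is non-negative (as $f$ is) and submodular (taking a union with the fixed set $T$ preserves the diminishing-returns inequality, with a zero marginal for elements already in $T$). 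Put $p := \max_{u \in \cN} \Pr_{S \sim \cD}[u \in S]$, so that $1 - p = \min_{u \in \cN} \Pr_{S \sim \cD}[u \notin S]$ and, under $\cD$, every element of $\cN$ is present with probability at most $p$. Since $g(\varnothing) = f(T)$, it remains to prove $\bE_{S \sim \cD}[g(S)] \ge (1 - p)\,g(\varnothing)$, and I would supply the following self-contained argument for the appendix.

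The first step is ``peeling'': for any set $S$ and any $u \in S$, submodularity gives $g(S) - g(S - u) = g(u \mid S - u) \ge g(u \mid \varnothing) = g(\{u\}) - g(\varnothing)$, so removing the elements of $S$ one at a time yields $g(S) \ge g(\varnothing) - \sum_{u \in S}(g(\varnothing) - g(\{u\}))$. Taking expectations over $S \sim \cD$ and using linearity,
\[
	\bE_{S \sim \cD}[g(S)] \;\ge\; g(\varnothing) - \sum_{u \in \cN} \Pr_{S \sim \cD}[u \in S]\cdot\bigl(g(\varnothing) - g(\{u\})\bigr)
	\enspace.
\]

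The one delicate point --- and the step I expect to be the main obstacle --- is bounding the correction sum. We cannot just replace $\Pr[u \in S]$ by $p$ term by term, because for elements with $g(\{u\}) > g(\varnothing)$ the summand is negative and the bound $\Pr[u \in S] \le p$ points the wrong way. Instead I would discard those (nonpositive) summands, which only strengthens the inequality, and then apply $\Pr[u \in S] \le p$ to the rest; writing $B := \{u \in \cN : g(\{u\}) \le g(\varnothing)\}$ this gives $\bE_{S\sim\cD}[g(S)] \ge g(\varnothing) - p\sum_{u \in B}(g(\varnothing) - g(\{u\}))$. A second telescoping --- now building $B$ up one element at a time and again invoking diminishing returns --- shows $g(B) - g(\varnothing) \le \sum_{u \in B}(g(\{u\}) - g(\varnothing))$, i.e. $\sum_{u \in B}(g(\varnothing) - g(\{u\})) \le g(\varnothing) - g(B) \le g(\varnothing)$, the last step being non-negativity of $g$. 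Plugging this in gives $\bE_{S\sim\cD}[g(S)] \ge g(\varnothing) - p\,g(\varnothing) = (1-p)\,g(\varnothing)$, which is the claim; everything besides this sign-handling is routine bookkeeping.
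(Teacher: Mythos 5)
You correctly spot the typo (the right-hand side should read $f(T)$, not $f(S)$), and your reduction—setting $g(R) := f(T \cup R)$ and $p := \max_{u \in \cN} \Pr_{S \sim \cD}[u \in S]$, so that the claim becomes $\bE_{S \sim \cD}[g(S)] \geq (1-p) \cdot g(\varnothing)$ for a non-negative submodular $g$—is sound and matches how the paper frames the lemma (as an instance of Lemma~2.2 of~\cite{BFNS14}). The gap is in your self-contained proof of that core inequality, right at the ``peeling'' step: you assert $g(u \mid S - u) \geq g(u \mid \varnothing)$, but diminishing returns gives exactly the reverse, $g(u \mid \varnothing) \geq g(u \mid S - u)$, since $\varnothing \subseteq S - u$. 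Peeling therefore yields the \emph{upper} bound $g(S) \leq g(\varnothing) + \sum_{u \in S}(g(\{u\}) - g(\varnothing))$, not the lower bound you need. Moreover, the intermediate inequality you derive from it, $\bE_{S\sim\cD}[g(S)] \geq g(\varnothing) - \sum_{u}\Pr[u \in S]\cdot(g(\varnothing) - g(\{u\}))$, is genuinely false and not merely unjustified: take $\cN = \{a,b\}$ with $g(\varnothing) = g(\{a\}) = g(\{b\}) = 1$ and $g(\{a,b\}) = 0$ (non-negative and submodular), and let $S$ be $\{a,b\}$ or $\varnothing$ each with probability $\nicefrac{1}{2}$. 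Then the left side is $\nicefrac{1}{2}$ while your right side is $1$; note that here $p = \nicefrac{1}{2}$ and the true bound $(1-p)g(\varnothing) = \nicefrac{1}{2}$ is tight, so no correction expressed purely through singleton marginals can work—the loss caused by elements arriving \emph{together} is invisible to the quantities $g(\{u\}) - g(\varnothing)$.

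Your later steps (discarding the non-positive summands, and the subadditivity bound $\sum_{u \in B}(g(\varnothing) - g(\{u\})) \leq g(\varnothing) - g(B) \leq g(\varnothing)$) are fine, but they rest on the false first step, so the argument does not go through. The fix requires marginals with respect to growing sets rather than the empty set, which is exactly what the paper's proof does: it orders the elements $u_1, \dotsc, u_n$ by non-decreasing $\Pr_{S\sim\cD}[u \notin S]$, uses submodularity to lower bound $f(u_i \mid T \cup (A_{i-1} \cap S))$ by $f(u_i \mid T \cup A_{i-1})$ where $A_{i-1}$ is the prefix $\{u_1, \dotsc, u_{i-1}\}$, and then an Abel summation rewrites the resulting sum as a non-negative combination of the values $f(T \cup A_i)$ plus the term $\Pr[u_1 \notin S] \cdot f(T)$, so non-negativity of $f$ finishes the proof. (Equivalently, one can argue through the Lov\'asz extension and threshold sets.) Some such device exploiting the joint structure is needed; the singleton-peeling route cannot be salvaged.
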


The next lemma is the last component we need in order to analyze Algorithm~\ref{alg:DetRandGreedy}.

\begin{lemma}\label{lem:improve}
For any iteration $1 \leq i \leq k$ of Algorithm~\ref{alg:DetRandGreedy},
\[
	{\textstyle \bE_{S \sim \cD_i}}[f(S)]-{\textstyle \bE_{S \sim \cD_{i - 1}}}[f(S)] \geq \nicefrac{1}{k} \cdot {\textstyle \bE_{S \sim \cD_{i - 1}}}[f(OPT \cup S)-f(S)]
	\enspace.
\]
\end{lemma}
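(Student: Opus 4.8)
The plan is to analyze the gain in expected function value contributed by iteration $i$, namely $\bE_{\cD_i}[f(S)] - \bE_{\cD_{i-1}}[f(S)]$, and to lower bound it by the objective value of the optimal extreme point of $(P)$, which in turn dominates the objective value of the \emph{particular} feasible solution from item~\eqref{obs:valid_assignment} of Observation~\ref{obs:simple}. First I would write the left-hand side exactly: by the construction of $\cD_i$ in terms of the variables $x(u,S)$ and $\ell(S)$, and using $\sum_{u\in M_i} x(u,S) + \ell(S) = 1$, every state $S \in \support(\cD_{i-1})$ splits its probability mass so that
\[
	\bE_{\cD_i}[f(S)] - \bE_{\cD_{i - 1}}[f(S)]
	= \bE_{S \sim \cD_{i - 1}}\Big[\sum_{u \in M_i} x(u, S)\, f(u \mid S)\Big]
	= \sum_{u \in M_i} \bE_{S \sim \cD_{i - 1}}[x(u, S)\, f(u \mid S)]
	\enspace,
\]
which is precisely the objective of $(P)$. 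Since the algorithm picks an \emph{optimal} extreme point of $(P)$, this quantity is at least the objective value achieved by any feasible point, in particular the one in Observation~\ref{obs:simple}\eqref{obs:valid_assignment} with $x(u,S) = \tfrac1k \characteristic[u \notin S]$.

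Next I would evaluate that feasible solution's objective: it equals $\tfrac1k \sum_{u \in M_i} \bE_{S \sim \cD_{i - 1}}[\characteristic[u \notin S]\, f(u \mid S)]$. Here I would use submodularity to replace $\characteristic[u\notin S] f(u\mid S)$ by something comparable to the marginal of $OPT$: for a fixed $S$, ordering the elements of $OPT \setminus S$ and telescoping gives $f(OPT \cup S) - f(S) \le \sum_{u \in OPT \setminus S} f(u \mid S) = \sum_{u \in OPT} \characteristic[u\notin S] f(u\mid S)$. So the target right-hand side $\tfrac1k \bE_{S\sim\cD_{i-1}}[f(OPT\cup S)-f(S)]$ is at most $\tfrac1k \sum_{u \in OPT} \bE_{S \sim \cD_{i-1}}[\characteristic[u\notin S]\, f(u\mid S)]$, i.e.\ the same form but summed over $u \in OPT$ instead of $u \in M_i$. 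The point of the greedy choice of $M_i$ is to dominate this: since $|OPT| \le k$ and $M_i$ is chosen to maximize $\sum_{u\in M_i}\bE_{S\sim\cD_{i-1}}[f(u\mid S)]$ over sets of size at most $k$, I would argue $\sum_{u \in M_i}\bE_{S\sim\cD_{i-1}}[\characteristic[u\notin S]\, f(u\mid S)] \ge \sum_{u\in OPT}\bE_{S\sim\cD_{i-1}}[f(u\mid S)]$ — one has to be slightly careful because $M_i$ maximizes the sum of the $\bE[f(u\mid S)]$ values, not the $\bE[\characteristic[u\notin S] f(u\mid S)]$ values; I will note that $\bE_{S}[\characteristic[u \notin S] f(u\mid S)] = \bE_S[f(u\mid S)]$ when we may assume $f(u\mid S)\ge 0$ for relevant terms, or handle the sign issue by observing that including $u$ with $f(u\mid S)<0$ only in the part of the probability space where $u\notin S$ still leaves the greedy bound valid after discarding nonpositive contributions. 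Chaining these inequalities yields the claim.

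The main obstacle I expect is exactly this last comparison between $M_i$ (chosen to be optimal for the unrestricted marginals $\bE[f(u\mid S)]$) and $OPT$ (which enters with the restriction $\characteristic[u\notin S]$ and with possibly negative marginals). The clean resolution is: define $g(u) := \bE_{S\sim\cD_{i-1}}[\max\{0, f(u\mid S)\}]$ is \emph{not} quite what $M_i$ maximizes either, so instead I would observe that for $u \in OPT$ with $u \notin S$ we have $f(OPT\cup S)-f(S) \le \sum_{u \in OPT\setminus S} f(u\mid S)$ directly (no truncation needed), take expectations, and then use that $M_i$ being the size-$\le k$ maximizer of $\sum_{u}\bE_S[f(u\mid S)]$ gives $\sum_{u\in M_i}\bE_S[f(u\mid S)] \ge \sum_{u\in OPT}\bE_S[f(u\mid S)] \ge \bE_S\big[\sum_{u\in OPT\setminus S} f(u\mid S)\big] \ge \bE_S[f(OPT\cup S)-f(S)]$; finally $\bE_S[x(u,S) f(u\mid S)]$ for the feasible solution equals $\tfrac1k\bE_S[\characteristic[u\notin S] f(u\mid S)]$, and since we only need a lower bound on $(P)$'s optimum, any discrepancy in sign between $\characteristic[u\notin S]f(u\mid S)$ and $f(u\mid S)$ for $u\in M_i$ can be absorbed because the feasible point is still feasible and its objective is still $\ge$ the $OPT$-based quantity once the greedy inequality is applied on the $\bE_S[f(u\mid S)]$ terms and the $\characteristic[u\notin S]$ factor only helps (it drops nonnegative-or-irrelevant mass). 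I would present this carefully so the signs work out, since that is the one place the argument could go wrong.
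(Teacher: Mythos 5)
Your proposal is correct and takes essentially the same route as the paper: write the gain $\bE_{S\sim\cD_i}[f(S)]-\bE_{S\sim\cD_{i-1}}[f(S)]$ as the LP objective, lower bound the optimal extreme point by the explicit feasible solution $x(u,S)=\nicefrac{1}{k}\cdot\characteristic[u\notin S]$ from Observation~\ref{obs:simple}, then use the greedy choice of $M_i$ (with $|OPT|\leq k$) and submodularity. The indicator/sign complication you labor over in the last paragraph is vacuous: $f(u\mid S)=f(S+u)-f(S)=0$ whenever $u\in S$, so $\characteristic[u\notin S]\cdot f(u\mid S)=f(u\mid S)$ identically and hence $\bE_{S\sim\cD_{i-1}}[\characteristic[u\notin S]\cdot f(u\mid S)]=\bE_{S\sim\cD_{i-1}}[f(u\mid S)]$ with no sign assumptions (note that your phrase ``drops nonnegative-or-irrelevant mass'' is stated backwards --- dropping strictly positive mass would hurt the lower bound; the point is that the dropped terms are exactly zero), after which the three-inequality chain is precisely the paper's proof.
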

\begin{proof}
One can view the construction of $\cD_i$ in the following way: the probability of every set $S \in \support(\cD_{i - 1})$ is split. A fraction of $\ell(S)$ of this probability is kept for $S$, and for every $u \in M_i$ a fraction of $x(u, S)$ of this probability is transferred to $S + u$. Using this view we get:
\begin{align*}
	{\textstyle \bE_{S \sim \cD_i}}[f(S)]-{\textstyle \bE_{S \sim \cD_{i - 1}}}[f(S)]
	={} &
	\sum_{u\in M_i}{\textstyle \bE_{S \sim \cD_{i - 1}}}[x(u, S) \cdot f(u \mid S)]
	\geq
	\frac{1}{k} \cdot \sum_{u \in M_i}{\textstyle \bE_{S \sim \cD_{i - 1}}}[f(u \mid S)]\\
	\geq{} &
	\frac{1}{k} \cdot \sum_{u \in OPT}{\textstyle \bE_{S \sim \cD_{i - 1}}}[f(u \mid S)]
	\geq
	\frac{1}{k} \cdot {\textstyle \bE_{S \sim \cD_{i - 1}}}[f(OPT \cup S) - f(S)]
	\enspace,
\end{align*}
where the first inequality holds since the solution found by Algorithm~\ref{alg:DetRandGreedy} must be at least as good as the feasible solution given by Observation~\ref{obs:simple}, the second inequality holds by the definition of $M_i$ and the last inequality holds by submodularity.
\end{proof}

We can now prove the next theorem, which implies Theorem~\ref{thm:cardinality} (note that Theorem~\ref{thm:cardinality} is trivial for $k = 1$).

\begin{theorem}
For $k \geq 2$, the approximation ratio of Algorithm~\ref{alg:DetRandGreedy} is at least $\left(1-\frac{1}{k}\right)^{k-1} \geq 1/e$, and it performs $O(k^2n)$ oracle queries.
\end{theorem}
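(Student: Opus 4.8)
The plan is to fold the three preceding lemmas into a single one‑step recursion for $g_i := \bE_{S \sim \cD_i}[f(S)]$, solve it, and then handle the query count by an amortized/caching argument.

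\textbf{Approximation ratio.} First apply Lemma~\ref{lem:decrease} with $T = OPT$ and $\cD = \cD_{i-1}$ to get $\bE_{S \sim \cD_{i-1}}[f(OPT \cup S)] \ge f(OPT)\cdot\min_{u\in\cN}\Pr_{S\sim\cD_{i-1}}[u\notin S]$, and then use Lemma~\ref{lem:upper-prob} (together with $f(OPT)\ge 0$) to replace the minimum probability by $(1-\tfrac1k)^{i-1}$. Substituting this into Lemma~\ref{lem:improve} and rearranging gives the recursion
\[
	g_i \;\geq\; \Bigl(1-\tfrac1k\Bigr) g_{i-1} \;+\; \tfrac1k\Bigl(1-\tfrac1k\Bigr)^{i-1} f(OPT)
	\enspace,
\]
valid for every $1 \le i \le k$. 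I would then prove by induction on $i$ that $g_i \ge \tfrac{i}{k}\bigl(1-\tfrac1k\bigr)^{i-1} f(OPT)$: the base case $i=0$ is just $g_0 = f(\varnothing)\ge 0$ (non‑negativity of $f$), and the inductive step is immediate — plugging the hypothesis for $g_{i-1}$ into the recursion makes the two terms collapse to exactly $\tfrac{i}{k}(1-\tfrac1k)^{i-1}f(OPT)$. Taking $i=k$ yields $g_k \ge (1-\tfrac1k)^{k-1}f(OPT)$; since the algorithm returns a state of $\support(\cD_k)$ maximizing $f$, its output is at least $g_k$, and the elementary inequality $(1-\tfrac1k)^{k-1}\ge 1/e$ (the left side is non‑increasing in $k$ with limit $1/e$) finishes this part.

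\textbf{Query count.} The only oracle calls are those needed to evaluate the marginals $\bE_{S\sim\cD_{i-1}}[f(u\mid S)]$, which are used both to pick $M_i$ and to write down the objective of $(P)$; thus iteration $i$ only requires the values $f(S)$ and $f(S+u)$ for $S\in\support(\cD_{i-1})$ and $u\in\cN$. I would cache all oracle answers so that $f$ is never evaluated twice on the same set. Then for a state $S$ that already appeared in $\cD_{i-1}$ these values are known, and the genuinely new work in iteration $i$ is only the $O(n)$ evaluations $f(S'+u)$, $u\in\cN$, for each state $S'$ freshly created when $\cD_{i-1}$ was built. Using the extreme‑point property of $(P)$ and Observation~\ref{obs:simple} (which gives $|\cD_i|\le k+|\cD_{i-1}|$, hence $|\cD_i| = O(k^2)$), one bounds the total number of distinct states that ever arise by $O(k^2)$, so the overall number of oracle queries is $O(k^2 n)$.

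\textbf{Main obstacle.} The analytic half is essentially mechanical once the recursion is written down — the only care needed there is tracking the factor $(1-\tfrac1k)^{i-1}$ correctly through the induction. The real obstacle is the query bound: one must argue that, even though the linear program is re‑solved from scratch every iteration, caching makes each distinct state cost only $O(n)$ queries once, and that the number of distinct states created over the whole run is $O(k^2)$ and not, say, $O(k^3)$ — this is exactly where the bound $|\cD_i|\le k+|\cD_{i-1}|$ and the structure of extreme points of $(P)$ have to be combined carefully.
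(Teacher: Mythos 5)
Your treatment of the approximation ratio is exactly the paper's own argument: chain Lemmata~\ref{lem:decrease} (with $T=OPT$, using $f(OPT)\geq 0$), \ref{lem:upper-prob} and \ref{lem:improve} into the recursion $g_i \geq (1-\nicefrac{1}{k})\,g_{i-1} + \nicefrac{1}{k}\,(1-\nicefrac{1}{k})^{i-1} f(OPT)$, induct to get $g_i \geq \frac{i}{k}(1-\nicefrac{1}{k})^{i-1} f(OPT)$ with base case $g_0 = f(\varnothing)\geq 0$, take $i=k$, and pass from the expectation to the returned maximizer by averaging. That half is correct and needs no further comment.

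The gap is in the query count. You claim that, with caching, the total number of \emph{distinct} states arising over the whole execution is $O(k^2)$, citing the bound $|\cD_i| \leq k + |\cD_{i-1}|$ from Observation~\ref{obs:simple}. That bound only controls the support size at a single point in time ($|\cD_i| \leq ik+1 \leq k^2+1$); it does not bound the cumulative number of distinct states, because the support can turn over completely in one iteration. At an extreme point of $(P)$ nothing forces any $\ell(S)$ to be positive: all the (at most $k+|\cD_{i-1}|$) non-zero variables may be $x(u,S)$ variables, in which case every tuple of $\cD_i$ has the form $(x(u,S)\Pr_{\cD_{i-1}}[S],\,S+u)$ and every state of $\cD_i$ is new. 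Thus iteration $i$ can create order $ik$ fresh states, each of which costs $\Theta(n)$ new oracle calls (its marginals against all of $\cN$) in the next iteration even with perfect caching, and the cumulative number of distinct states is only bounded by $\sum_i |\cD_i| = O(k^3)$. So your argument, as written, yields $O(k^3 n)$ rather than $O(k^2 n)$; to get $O(k^2)$ total states you would need an additional structural argument limiting how many states are actually created or split per iteration, which the extreme-point property alone does not give. For comparison, the paper does not cache at all: it simply charges iteration $i$ with $n\cdot|\support(\cD_{i-1})| \leq nik$ queries for recomputing all expected marginals and then sums over iterations -- and note that, taken literally, that summation is also $\Theta(nk^3)$, so the stated $O(k^2 n)$ is precisely the point where extra care (beyond either your caching claim or the paper's one-line accounting) is required.
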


\begin{proof}
By combining Lemmata~\ref{lem:upper-prob}, \ref{lem:decrease} and~\ref{lem:improve}, we get
\begin{equation}{\textstyle \bE_{S \sim \cD_i}}[f(S)]-{\textstyle \bE_{S \sim \cD_{i - 1}}}[f(S)] \geq \frac{1}{k} \cdot {\textstyle \bE_{S \sim \cD_{i - 1}}}\left[\left(1-\frac{1}{k}\right)^{i-1} \cdot f(OPT)-f(S)\right] \enspace.\label{ineq-improve}\end{equation}
Next, we prove by induction that:
\[{\textstyle \bE_{S \sim \cD_{i}}}[f(S)] \geq \frac{i}{k} \cdot \left(1-\frac{1}{k}\right)^{i-1}f(OPT) \enspace.\]

For $i = 0$, this is true since $f(\varnothing) \geq 0 = (0/k) \cdot (1 - 1/k)^{-1} \cdot f(OPT)$. Assume now that the claim holds for every $i' < i$, let us prove it for $i > 0$.
\begin{align*}
{\textstyle \bE_{S \sim \cD_{i}}}[f(S)] & \geq  {\textstyle \bE_{S \sim \cD_{i-1}}}[f(S)] + \frac{1}{k} \cdot {\textstyle \bE_{S \sim \cD_{i - 1}}}\left[\left(1-\frac{1}{k}\right)^{i-1} \cdot f(OPT)-f(S)\right] \\
& = \left(1- \frac{1}{k}\right) \cdot {\textstyle \bE_{S \sim \cD_{i-1}}}[f(S)] + \frac{1}{k} \cdot \left(1-\frac{1}{k}\right)^{i-1} \cdot f(OPT)\\
& \geq \left(1- \frac{1}{k}\right) \cdot \frac{i-1}{k} \cdot \left(1-\frac{1}{k}\right)^{i-2} \cdot f(OPT) + \frac{1}{k} \cdot \left(1-\frac{1}{k}\right)^{i-1} \cdot f(OPT)\\
& = \frac{i}{k} \cdot \left(1-\frac{1}{k}\right)^{i-1} \cdot f(OPT)
\enspace,
\end{align*}
where the first inequality follows by inequality~\eqref{ineq-improve}, and the second inequality follows by the induction hypothesis.

The approximation ratio guaranteed by the theorem follows immediately by plugging $k$ into the induction hypothesis.
Finally, Observation~\ref{obs:simple} implies $|\cD_i| \leq ik+1$, and thus, in the $i$-th iteration Algorithm~\ref{alg:DetRandGreedy} makes at most $n \cdot \support(\cD_{i - 1}) \leq n \cdot |\cD_{i - 1}| \leq nik$ oracle queries. Thus, the total number of oracle queries in all the iterations is at most $O(k^2 n)$.
\end{proof}

\subsection{A Tight Example for Algorithm~\ref{alg:DetRandGreedy}}\label{sec:badexample}

In this section we give an example of a ``bad'' instance for which Algorithm~\ref{alg:DetRandGreedy} has an approximation ratio of at most $e^{-1} + O(\frac{1}{k})$. Specifically, the optimal solution for the instance we describe has a value of at least $1$, while Algorithm~\ref{alg:DetRandGreedy} may produce a set of value at most $e^{-1} + O(\frac{1}{k})$. In the rest of this section we assume $k$ is larger than some arbitrary constant $\ell$ (to be determined later).

The ground set $\cN$ of our bad instance is the union of two sets $O$ and $Y$, both of size $k$ (if one wishes to have $n > 2k$, it is possible to add an arbitrary number of elements that do not affect the objective function). 
The objective function of the instance is the function $f\colon 2^\cN \to \bR^+$ defined as follows,
\[
	f(S)
	=
	\frac{|S \cap O|}{k} \cdot \left(1 - \frac{|S \cap Y|}{k}\right) + \left[g\left(\frac{|S \cap Y|}{k}\right) + \frac{\ell \cdot |S \cap Y|}{k^2}\right] \cdot \left(1 - \frac{|S \cap O|}{k}\right)
	\enspace,
\]
where $g\colon [0, 1] \to [0, 1]$ is a function given by the following formula:
\[
	g(x)
	=
	\begin{cases}
		(x-1) \cdot \ln(1 - x) & \text{for $0 \leq x \leq 1 - e^{-1}$} \enspace,\\
		e^{-1} & \text{for $1 - e^{-1} \leq x \leq 1$} \enspace.
	\end{cases}
\]
Observe that $g(x)$ is a continuous function. Additionally, we note that,
\[
	g'(x)
	=
	\begin{cases}
		1 + \ln(1-x) & \text{for $0 \leq x \leq 1 - e^{-1}$} \enspace,\\
		0 & \text{for $1 - e^{-1} \leq x \leq 1$} \enspace,
	\end{cases}
\]
and
\[
	g''(x)
	=
	\begin{cases}
		\frac{1}{x-1} & \text{for $0 \leq x < 1 - e^{-1}$} \enspace,\\
		0 & \text{for $1 - e^{-1} < x \leq 1$} \enspace.
	\end{cases}
\]
Observe that $g'$ is always non-negative and $g''$ is always non-positive. Thus, $g$ is a non-decreasing continuous concave function with $g(0)=0$ and $g(1)= e^{-1}$.

It is useful to find expressions for the marginal contribution of an element $u \in \cN$ to a set $S \subseteq \cN$ given the objective $f$. Let $y=|S \cap Y|/k$ and $x=|S \cap O|/k$, then
\begin{align}
f(u \mid S) & = -\frac{x}{k} + \left(1-x\right)\left(g(y+\nicefrac{1}{k})-g(y) +\frac{\ell}{k^2}\right) & \forall u \in Y \setminus S \label{eq:marginal_Y}\\
f(u \mid S) & = \frac{1}{k}\left((1-y)- \left(g(y)+\frac{\ell y}{k}\right)\right) & \forall u \in O \setminus S \label{eq:marginal_O}
\end{align}

\begin{observation}
The function $f$ is a submodular function.
\end{observation}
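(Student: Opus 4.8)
The plan is to verify submodularity directly via the diminishing-returns characterization: for every $u \in \cN$ and every $S \subseteq T \subseteq \cN$ with $u \notin T$, show $f(u \mid S) \geq f(u \mid T)$. By the decomposition already given, it suffices to treat two cases separately depending on whether $u \in O$ or $u \in Y$, using the explicit marginal formulas \eqref{eq:marginal_Y} and \eqref{eq:marginal_O}. Writing $x = |S \cap O|/k$, $y = |S \cap Y|/k$ and $x' = |T \cap O|/k \geq x$, $y' = |T \cap Y|/k \geq y$, I would argue monotonicity of each marginal expression separately in the $x$-coordinate and in the $y$-coordinate, which together imply the full diminishing-returns inequality since one can pass from $(x,y)$ to $(x',y')$ by first increasing $x$ and then increasing $y$.

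\textbf{Case $u \in O \setminus T$.} Here, by \eqref{eq:marginal_O}, $f(u \mid S) = \tfrac1k\bigl((1-y) - g(y) - \tfrac{\ell y}{k}\bigr)$ depends only on $y$. As a function of $y$ it is affine-minus-$g$, and since $g$ is convex... wait, $g$ is concave, so $-g$ is convex; but $1 - y - \tfrac{\ell y}{k}$ is decreasing and $-g(y)$ need not be monotone. The point is simpler: $g' \geq 0$ everywhere, so $\tfrac{d}{dy}\bigl[(1-y) - g(y) - \tfrac{\ell y}{k}\bigr] = -1 - g'(y) - \tfrac{\ell}{k} < 0$, hence this marginal is (strictly) decreasing in $y$ and constant in $x$; increasing $y$ from $y$ to $y'$ can only decrease it, so $f(u \mid S) \geq f(u \mid T)$. (One should phrase this via finite differences rather than derivatives since $y, y'$ are multiples of $1/k$, but concavity/monotonicity of $g$ transfers to the discrete differences $g(y + 1/k) - g(y)$ being non-increasing in $y$, which is the only fact needed.)

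\textbf{Case $u \in Y \setminus T$.} By \eqref{eq:marginal_Y}, $f(u \mid S) = -\tfrac{x}{k} + (1-x)\bigl(g(y + \tfrac1k) - g(y) + \tfrac{\ell}{k^2}\bigr)$. Monotonicity in $y$: the discrete difference $\Delta(y) := g(y+\tfrac1k) - g(y)$ is non-increasing in $y$ because $g$ is concave, and $(1-x) \geq 0$ while $\tfrac{\ell}{k^2}$ is constant, so the whole expression is non-increasing in $y$. Monotonicity in $x$: differentiating (or taking the difference) in $x$ gives $-\tfrac1k - \bigl(g(y+\tfrac1k) - g(y) + \tfrac{\ell}{k^2}\bigr)$, so I need $g(y+\tfrac1k) - g(y) + \tfrac{\ell}{k^2} \geq -\tfrac1k$, i.e. the coefficient multiplying $(1-x)$ is at least $-1/k$; this holds because $g$ is non-decreasing, so $\Delta(y) \geq 0 \geq -1/k - \tfrac{\ell}{k^2}$. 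Hence the marginal is non-increasing in $x$ as well, completing this case.

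\textbf{The main obstacle} is purely bookkeeping: ensuring the argument respects the discreteness (all relevant quantities are integer multiples of $1/k$) so that one uses the \emph{discrete} concavity/monotonicity of $g$ — namely that $g(y + 1/k) - g(y)$ is non-increasing in $y$ and non-negative — rather than its derivatives, and checking the boundary behavior of $g$ at the kink $y = 1 - e^{-1}$ where $g'$ jumps from $0$ to a positive value (here concavity still holds since $g'$ is non-increasing: it equals $1 + \ln(1-x)$ which decreases to $0$ at $x = 1 - e^{-1}$ and then stays $0$). Everything else reduces to the sign of $g'$ (non-negative) and $g''$ (non-positive), both already recorded in the excerpt. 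I do not expect any genuine difficulty beyond organizing these finite-difference estimates cleanly.
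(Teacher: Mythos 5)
Your argument is correct and is essentially the paper's own proof: the paper likewise establishes submodularity by observing that the marginal \eqref{eq:marginal_Y} is non-increasing in $x$ (since $g$ is non-decreasing) and in $y$ (since $g$ is concave), while the marginal \eqref{eq:marginal_O} is independent of $x$ and non-increasing in $y$. Your extra care about using the discrete differences $g(y+1/k)-g(y)$ rather than derivatives is a welcome but minor refinement of the same reasoning.
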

\begin{proof}
The marginal~\eqref{eq:marginal_Y} of an element $u \in Y$ is a decreasing function of $x$ since $g$ is non-decreasing and of $y$ since $g$ is concave. On the other hand, the marginal~\eqref{eq:marginal_O} of an element $u \in O$ is independent of $x$ and a decreasing function of $y$ since $g$ is non-decreasing.
\end{proof}

In Appendix \ref{app:tight} we complete the proof. We show that given the above bad instance, Algorithm~\ref{alg:DetRandGreedy} may terminate with a distribution over subsets of $Y$. The value of $f$ for any such set $S$ is at most:
\[
	f(S)
	=
	g\left(\frac{|S|}{k}\right) + \frac{\ell \cdot |S|}{k^2}
	\leq
	e^{-1} + \frac{\ell}{k}
	=
	e^{-1} + O\left(\frac{1}{k}\right)
	\enspace.
\]

On the other hand, $O$ is a feasible solution and $f(O) = 1$. Thus, completing the proof.

\section{Conclusions}

In this paper we proposed a new technique for derandomization of algorithms in the area of submodular function maximization.
For unconstrained submodular maximization we showed that randomization is not necessary for obtaining the best possible approximation ratio. For submodular maximization with a cardinality constraint we obtained nearly the best known result.

The main interesting open question is whether algorithms that are based on the multilinear extension can be derandomized. In particular, it is interesting whether the continuous greedy approach \cite{CCPV11,FNS11} used to obtain optimal results for maximizing a monotone submodular function subject to a matroid independence constraint can be derandomized.
One possible direction is to try to approximate the multiliner extension function deterministically using its special properties.
Another interesting question is whether the number of oracle calls of the deterministic algorithms can be reduces. A possible way to speed up algorithms produced by our method is to keep the size of the distributions small by avoiding splitting sets when this results in sets of a too low probability. As long as only low probability sets are affected, this should not significantly decrease the quality of the output, while reducing the number of oracle queries needed (and speeding up the algorithm).

\bibliographystyle{plain}
\bibliography{thesis}

\appendix

\section{Omitted Proofs} \label{app:omitted}

The following claim slightly improves item~\ref{obs:size_increase24} of Observation~\ref{obs:simple_unconstrained}, and shows that every iteration of Algorithm~\ref{alg:doub-greedy} can be implemented in near linear time.

\begin{claim}\label{cl-unconstraind-extreme}
Let $(P)$ be the formulation in Algorithm~\ref{alg:doub-greedy}. There always exists a solution to $(P)$ with at most $1+|\cD_{i-1}|$ non-zero variables. Moreover, this solution can be computed in near linear time.
\end{claim}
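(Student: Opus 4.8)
First I would recast $(P)$ in a single variable per state. Eliminating $w(X,Y) = 1 - z(X,Y)$ and writing $p(X,Y)$ for $\Pr_{\cD_{i-1}}[(X,Y)]$, the slacks of the two inequalities of $(P)$ are
\[
	\textstyle\sum_{(X,Y)\in\support(\cD_{i-1})} p(X,Y)\bigl(z(X,Y)(a_i(X) - 3b_i(Y)) + b_i(Y)\bigr) \ge 0
\]
and
\[
	\textstyle\sum_{(X,Y)\in\support(\cD_{i-1})} p(X,Y)\bigl(z(X,Y)(3a_i(X) - b_i(Y)) + b_i(Y) - 2a_i(X)\bigr) \ge 0 \enspace,
\]
each a sum of per-state terms affine in the single variable $z(X,Y)\in[0,1]$; hence every state's $z$-value can be adjusted independently of the others. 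Starting from the explicit feasible point $z^0$ of Observation~\ref{obs:simple_unconstrained}(\ref{obs:valid_assignment22}), and using $a_i(X)+b_i(Y)\ge 0$ (proved as in that observation), one checks that $z^0(X,Y)\in(0,1)$ exactly when $a_i(X)>0$ and $b_i(Y)>0$, and that at such a state both per-state slack terms equal $p(X,Y)(a_i(X)-b_i(Y))^2/(a_i(X)+b_i(Y))\ge 0$; every other state already gets an integral $z^0$-value. (A quick non-constructive existence proof is also available: a vertex of the feasible polytope of $(P)$ with two fractional coordinates must make both inequalities tight, and one can then move those two coordinates within one tight hyperplane until one of them reaches an endpoint while the other inequality stays satisfied. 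This does not obviously meet the running-time claim, so I would instead argue constructively.)

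The first step of the construction disposes of states with a lopsided ratio. A short computation with the marginals shows that if $a_i(X)\ge 3b_i(Y)$, raising $z(X,Y)$ from $z^0(X,Y)$ to $1$ changes that state's two per-state slack terms by $p(X,Y)\frac{b_i(Y)(a_i(X)-3b_i(Y))}{a_i(X)+b_i(Y)}\ge 0$ and $p(X,Y)\frac{b_i(Y)(3a_i(X)-b_i(Y))}{a_i(X)+b_i(Y)}\ge 0$, hence preserves feasibility; symmetrically, if $b_i(Y)\ge 3a_i(X)$, lowering $z(X,Y)$ to $0$ preserves feasibility. After performing these (integral) roundings, every still-fractional state has $a_i(X)/b_i(Y)\in(1/3,3)$, and for such a state, raising $z(X,Y)$ from $0$ to $1$ at a constant rate increases the second slack by $p(X,Y)(3a_i(X)-b_i(Y))>0$ while decreasing the first slack by $p(X,Y)(3b_i(Y)-a_i(X))>0$.

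The last step is a fractional-knapsack greedy on the still-fractional states. Set all their $z$-values to $0$: the first slack only rose (all increments were nonnegative), so it still holds, while the second slack fell, leaving a deficit $T$. If $T\le 0$ we are done, with no fractional variable. Otherwise, sort these states by decreasing efficiency $\sigma(X,Y)=(3a_i(X)-b_i(Y))/(3b_i(Y)-a_i(X))\in(0,\infty)$ and raise their $z$-values to $1$ one at a time until the accumulated gain in the second slack first reaches $T$ (possible, since the total available gain is at least $T$); let the last state raised stop at the value giving exactly $T$, at some $z(X,Y)=\theta\in(0,1]$ — the only possibly-fractional variable. The second constraint is now tight, and the first still holds by the key comparison: the greedy spends on the first slack no more than the amount gained there when the states were set to $0$, because raising each still-fractional state from $0$ to $z^0(X,Y)$ instead would restore at least $T$ units of the second slack at exactly that cost, and the value function of the fractional knapsack is convex and nondecreasing; hence the first slack ends up at least its value before the ``set to $0$'' step, which is $\ge 0$. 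Verifying this comparison is the one genuinely non-mechanical point; everything else is bookkeeping with the explicit marginals. Finally, the $a_i(X),b_i(Y)$ values cost $2|\cD_{i-1}|$ oracle queries, and the canonical assignment, the two roundings, the sort, and one linear scan take $O(|\cD_{i-1}|\log|\cD_{i-1}|)$ time and no LP solver (linear if the knapsack pivot is found by weighted-median selection). Since at most one state ends with both $z(X,Y)$ and $w(X,Y)$ positive, the resulting feasible solution of $(P)$ has at most $1+|\cD_{i-1}|$ nonzero variables.
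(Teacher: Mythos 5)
Your proposal is correct, but it follows a genuinely different route from the paper's. The paper manipulates $(P)$ itself: it replaces the first constraint by the objective of maximizing its slack, eliminates $w(X,Y)=1-z(X,Y)$, and observes that the resulting program is a one-constraint fractional knapsack in which values and sizes may be negative; a modified density rule solves it optimally in near-linear time, leaves at most one item fractional, and feasibility of $(P)$ (Observation~\ref{obs:simple_unconstrained}) guarantees that the optimum also satisfies the constraint that was turned into the objective. You instead keep both constraints and round the explicit canonical solution of item~\ref{obs:valid_assignment22} of Observation~\ref{obs:simple_unconstrained}: fractional states with $a_i(X)\ge 3b_i(Y)$ or $b_i(Y)\ge 3a_i(X)$ are rounded integrally (your per-state slack computations showing both slacks only increase are correct), the remaining fractional states are reset to $0$, and the second constraint is repaired by a minimum-cost fractional knapsack-cover greedy, with the canonical values $z^0$ serving as the benchmark showing the first slack never falls below its original non-negative value. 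The step you rightly single out---that the greedy spends on the first slack at most $\sum_j z^0_j\, p_j\,(3b_j-a_j)$---is exactly the standard optimality of the ratio greedy for the continuous covering knapsack: the minimum cost $c(t)$ of restoring $t$ units of the second slack is nondecreasing and attained by the greedy, and $z^0$ witnesses $c(T)\le c\bigl(\sum_j z^0_j\,p_j(3a_j-b_j)\bigr)\le \sum_j z^0_j\,p_j(3b_j-a_j)$; spelling this out with a short exchange argument would make the write-up airtight. The paper's reduction buys a single clean knapsack and a solution that additionally maximizes the first constraint's slack; your argument buys never discarding a constraint and exhibiting feasibility explicitly at every step, at the price of the slightly more delicate comparison. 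Both constructions end with at most one state having $z$ and $w$ simultaneously positive, hence at most $1+|\cD_{i-1}|$ nonzero variables, and both run in near-linear time without an LP solver, which is all the claim (and the remark following Observation~\ref{obs:simple_unconstrained}) requires.
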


 \begin{proof}
To get the desired solution of $(P)$ we need make a few simple manipulations to $(P)$. First, we replace the first constraint of $(P)$ with an objective function asking to maximize:
\[\bE_{\cD_{i - 1}}[z(X,Y)\cdot a_i(X) + w(X,Y)\cdot b_i(Y)] - 2 \cdot \bE_{\cD_{i - 1}}[z(X,Y) b_i(Y)] \enspace,\]
Since $(P)$ is feasible by Observation~\ref{obs:simple_unconstrained}, any optimal solution for the new formulation is a feasible solution for $(P)$. We can simplify the new objective of $(P)$ by removing constants and using the fact that $w(X, Y)$ is fully determined by $z(X, Y)$ due to the equality $z(X,Y)+w(X,Y)=1$. This yields:
\[\bE_{\cD_{i - 1}}[z(X,Y)\cdot \left(a_i(X)-3 b_i(Y)\right)] \enspace.\]
Similarly, by exchanging terms, the second constraint of $(P)$ can be replaced with the equivalent form:
\[\bE_{\cD_{i - 1}}[z(X,Y)\cdot \left(b_i(Y)- 3 \cdot a_i(X)\right)] \leq  \bE_{\cD_{i - 1}}[ b_i(Y)-2\cdot a_i(X)] \enspace.\]

The resulting linear program, for which we need to find an optimal solution, is a variant of the fractional knapsack problem of the following form (where the number $m$ of items is $|\support(\cD_{i - 1})| \leq |\cD_{i - 1}|$).
\[\begin{array}{lll}
	\max & \sum_{j=1}^{m}v_j \cdot z_j \\[2mm]
	& \sum_{j=1}^{m}s_j \cdot z_j &\leq B\\[2mm]
	& 0 \leq z_j \leq 1 & \forall\; 1 \leq j \leq m
\end{array}\]

The only change compared to a standard (fractional) knapsack problem is that $v_j$ and $s_j$ may have negative values (such values can be interpreted as an option to buy additional knapsack space).
However, a simple modification of the, so called, density rule can be used to solve the problem optimally.
First, take to the solution all items with $v_j>0$ and $s_j<0$. Also, omit all items of $v_j<0$ and $s_j>0$.
This leaves us with two types of items: ``positive'' items having $v_j, s_j \geq 0$, and ``negative'' items having $v_j, s_j<0$. We sort the positive items in decreasing order of $v_j/s_j$ (intuitively, the value that we can earn per unit of the knapsack).
Similarly, we sort the negative items increasingly by $v_j/s_j$ (intuitively, the price we need to pay to buy a unit of the knapsack).
The algorithm then starts by (fractionally) taking the first positive items until the knapsack becomes full (or we are out of positive items). 
We then continue (fractionally) taking positive items and negative items in parallel as long as the value per unit gained by the positive item is at least equal to the price per unit paid for the negative item.
It is easy to see that this algorithm produces an optimal solution for the above fractional knapsack problem, and whenever it terminates there is only a single (positive or negative) item taken fractionally.

To complete the proof of the claim observe that when translating a solution for the fractional knapsack problem into a solution for the original formulation $(P)$ we get two non-zero variables for every item taken fractionally, and one non-zero variable for every other item.
\end{proof}

Lemma~\ref{lem:improve} is an immediate implication of Lemma~2.2 of~\cite{BFNS14}. For completeness, we give an independent proof of it.

\begin{replemma}{lem:decrease}
For any subset $T$ and a distribution $\cD$,
\[
	{\textstyle \bE_{S \sim \cD}}[f(T \cup S)] \geq f(S) \cdot \min_{u \in N} {\textstyle \Pr_{S \sim \cD}}[u \not \in S]
	\enspace.
\]
\end{replemma}
\begin{proof}
Let $u_1, u_2, \dotsc, u_n$ denote the elements of $N$ sorted by a non-decreasing order of the probability ${\textstyle \Pr_{S \sim \cD}}[u \not \in S]$. Additionally, let $A_i=\{u_1,u_2, \ldots, u_i\}$ be the set of the first $i$ elements in this order (for every $0 \leq i \leq n$). Then,
\begin{align*}
	&
	{\textstyle \bE_{S \sim \cD}}[f(T \cup S)]
	=
	f(T) + \sum_{i=1}^{n}{\textstyle \bE_{S \sim \cD}}[1[u_i \in S] \cdot f(u_i \mid T \cup (A_{i-1} \cap S))]\\
	\geq{} &
	f(T) + \sum_{i=1}^{n}{\textstyle \bE_{S \sim \cD}}[1[u_i \in S] \cdot f(u_i \mid T \cup A_{i-1})]
	=
	f(T) + \sum_{i=1}^{n} f(u_i \mid T \cup A_{i-1})] \cdot {\textstyle \Pr_{S \sim \cD}}[u_i \in S]\\
	={} &
	{\textstyle \Pr_{S \sim \cD}}[u_1 \not \in S] \cdot f(T) + \sum_{i=1}^{n - 1}({\textstyle \Pr_{S \sim \cD}}[u_i \in S]-{\textstyle \Pr_{S \sim \cD}}[u_{i + 1} \in S]) \cdot f\left(T \cup A_i\right) \\ &+ {\textstyle \Pr_{S \sim \cD}}[u_n \in S] \cdot f(\cN)
	\geq
	{\textstyle \Pr_{S \sim \cD}}[u_1 \not \in S]\cdot f(T)
	\enspace,
\end{align*}
where the first inequality holds by submodularity and the second inequality is based on the fact that ${\textstyle \Pr_{S \sim \cD}}[u_i \in S]$ is a non-increasing function of $i$.
\end{proof} 

\section{Proof of the Tight Example for Algorithm~\ref{alg:DetRandGreedy}}\label{app:tight}

In this section we continue the proof of the tight example for Algorithm~\ref{alg:DetRandGreedy}. Let $f$ be the submodular function defined in Section \ref{sec:badexample}. We analyze a possible execution of the algorithm given this function.

Let us denote the elements of $Y$ by $u_1, u_2, \dotsc, u_k$. For notational convenience, given an index $i > k$, we denote by $u_i$ the element $u_j$ having $i \equiv j \pmod k$. Given a value $z \in [0, 1]$, let us characterize a distribution $\cD(z)$ as follows. The support of the distribution $\cD(z)$ contains at most $2k$ states:
\begin{itemize}
	\item For every $1 \leq i \leq k$, the state $S^L_i = \{u_j\}_{j = i}^{i+ \lfloor k z\rfloor}$ has a probability of $z - \lfloor k z\rfloor/k$.
	\item For every $1 \leq i \leq k$, the state $S^S_i = \{u_j\}_{j = i}^{i+ \lfloor k z\rfloor-1}$ has a probability of $\lfloor kz + 1\rfloor/k - z$.
\end{itemize}
It can be verified that all the above probabilities add up to $1$, and thus, $\cD(z)$ is a valid distribution. Technically the above definition of $\cD(z)$ sometimes defines multiple identical states (for example, the states $S^S_i$ are identical when $z < 1/k$). Whenever this happens, we formally unify these states and give the unified state a probability equal to the sum of the probabilities of the unified states. In the rest of the proof we ignore that possibility for simplicity.

Intuitively, $\cD(z)$ is a distribution over two types of subsets of $Y$: cyclically continuous states of size $1 + \lfloor k z\rfloor$ and cyclically continuous states of size $\lfloor k z\rfloor$. The distribution is symmetrical in the sense that all the cyclically continuous states of a given length have equal probabilities.
\begin{observation}
For every $z \in [0, 1]$ and element $u \in \cN$,
\[
	\Pr_{S \sim \cD(z)}[u \in S]
	=
	\begin{cases}
		z & \text{if $u \in Y$} \enspace,\\
		0 & \text{if $u \in O$} \enspace.
	\end{cases}
\]
\end{observation}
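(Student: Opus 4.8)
The plan is to handle the two cases of the statement separately, the case $u \in O$ being essentially immediate and the case $u \in Y$ reducing to a short counting argument.

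For $u \in O$, note that by construction every state in $\support(\cD(z))$ is a set of the form $\{u_j\}_{j=i}^{\cdot}$ with all indices lying in $\{1,\dots,k\}$, hence a subset of $Y$. Consequently no state contains $u$, and $\Pr_{S \sim \cD(z)}[u \in S] = 0$, as claimed. For $u \in Y$, I would write $m = \lfloor kz\rfloor$ and $\theta = kz - m \in [0,1)$, so that the ``long'' states $S^L_i$ are cyclic intervals of $\{u_1,\dots,u_k\}$ of length $m+1$, each carrying probability $z - m/k = \theta/k$, and the ``short'' states $S^S_i$ are cyclic intervals of length $m$, each carrying probability $(m+1)/k - z = (1-\theta)/k$; a quick check that $k \cdot \theta/k + k \cdot (1-\theta)/k = 1$ re-confirms that $\cD(z)$ is a distribution.

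The key combinatorial point is that, because the index set is cyclic and the states of each type are exactly all $k$ cyclic shifts of one another, a fixed element $u = u_\ell$ lies in exactly $m+1$ of the long states and in exactly $m$ of the short states. Indeed $u_\ell \in S^L_i$ iff $i \equiv \ell - t \pmod k$ for some $0 \le t \le m$, which singles out $m+1$ values of $i$, and likewise $u_\ell \in S^S_i$ iff $i \equiv \ell - t \pmod k$ for some $0 \le t \le m-1$, singling out $m$ values of $i$. Summing the contributions of the two types of states then yields
\[
	\Pr_{S \sim \cD(z)}[u \in S]
	= (m+1)\left(z - \frac{m}{k}\right) + m\left(\frac{m+1}{k} - z\right)
	= z \enspace,
\]
since the two $m(m+1)/k$ terms cancel.

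I expect the only mildly delicate point to be the bookkeeping of degenerate parameter values rather than the counting itself: when $m = 0$ the short states are empty (and the count $m=0$ remains correct); when $z$ is an integer multiple of $1/k$ one has $\theta = 0$, so the long states have probability $0$ and drop out; and when $z = 1$ the long states again have probability $0$ while the short states all coincide with $Y$. In each of these cases the displayed identity holds verbatim — adopting, as in the definition of $\cD(z)$, the convention that a zero-probability state contributes nothing — so no separate treatment is needed, and the remainder of the argument is a one-line computation.
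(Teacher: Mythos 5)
Your proof is correct and takes essentially the same approach as the paper's: elements of $O$ never appear in any state of $\cD(z)$, and for $u \in Y$ you count the $\lfloor kz\rfloor + 1$ long and $\lfloor kz\rfloor$ short cyclic intervals containing $u$ and sum their probabilities, which is exactly the paper's one-line computation. Your remarks on the degenerate values of $z$ simply spell out the unification convention the paper already adopts when defining $\cD(z)$, so nothing further is needed.
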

\begin{proof}
It is clear that elements of $O$ never appear in a set distributed like $\cD(z)$, thus, we consider only an element $u \in Y$. By symmetry $u$ appears in $\lfloor k z\rfloor + 1$ cyclically continuous states of size $\lfloor k z\rfloor + 1$ and $\lfloor k z\rfloor$ cyclically continuous states of size $\lfloor k z\rfloor$. Thus,
\begin{align*}
	\Pr_{S \sim \cD(z)}[u \in S]
	={} &
	(\lfloor k z\rfloor + 1) \cdot (z - \lfloor k z\rfloor/k) + (\lfloor k z\rfloor) \cdot (\lfloor kz + 1\rfloor/k - z)\\
	={} &
	\lfloor k z\rfloor \cdot (\lfloor kz + 1\rfloor -  \lfloor k z\rfloor)/k + (z - \lfloor k z\rfloor/k)
	=
	z
	\enspace.
	\qedhere
\end{align*}
\end{proof}

\begin{lemma} \label{lem:possible_run}
For every $0 \leq i \leq k$, Algorithm~\ref{alg:DetRandGreedy} may set $\cD_i = \cD(z_i)$, where $z_i= 1-(1-\nicefrac{1}{k})^i$.
\end{lemma}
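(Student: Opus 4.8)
The statement is that a particular run of Algorithm~\ref{alg:DetRandGreedy} on the bad instance stays inside the family of symmetric cyclically-continuous distributions $\cD(z)$, with the parameter $z$ evolving exactly like the lower bound in Lemma~\ref{lem:upper-prob}, i.e. $z_i = 1-(1-\nicefrac1k)^i$. The proof will be by induction on $i$. The base case $i=0$ is immediate: $z_0 = 0$, and $\cD(0)$ is (after unifying identical states) the point mass on $\varnothing$, which is exactly $\cD_0$. For the inductive step, assuming $\cD_{i-1}=\cD(z_{i-1})$, I must (a) identify a legitimate choice of $M_i$ that the algorithm \emph{may} make, (b) exhibit a feasible, optimal extreme-point solution $x(\cdot,\cdot),\ell(\cdot)$ of $(P)$, and (c) check that the resulting $\cD_i$ is exactly $\cD(z_i)$.

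\textbf{Choosing $M_i$.} Using the marginal formulas~\eqref{eq:marginal_Y} and~\eqref{eq:marginal_O} together with the preceding observation that $\Pr_{S\sim\cD(z_{i-1})}[u\in S]=z_{i-1}$ for $u\in Y$ and $0$ for $u\in O$, I would compute $\bE_{S\sim\cD_{i-1}}[f(u\mid S)]$ for $u\in Y$ and for $u\in O$. Since every state in $\support(\cD(z_{i-1}))$ is a subset of $Y$ with $|S\cap O|=0$, the $O$-marginal~\eqref{eq:marginal_O} evaluated at $x=0$ and $y=|S\cap Y|/k$ is $\tfrac1k\bigl((1-y)-(g(y)+\ell y/k)\bigr)$, and a short computation (using $g(y)=(y-1)\ln(1-y)$ on the relevant range, or just the concavity bound $g(y)\le y$) shows that the expected $Y$-marginal is strictly smaller than the expected $O$-marginal in the regime $z_{i-1}\le 1-e^{-1}$, i.e. for $i\le$ roughly $k$ iterations — this is exactly where the $\ell\cdot|S\cap Y|/k^2$ additive term and the choice of the constant $\ell$ come in, since without it the two marginals would only be weakly separated. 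Hence the algorithm \emph{may} take $M_i = O$ (the $k$ elements of $O$), all of which lie outside every set in $\support(\cD_{i-1})$, so $\Pr_{S\sim\cD_{i-1}}[u\notin S]=1$ for all $u\in M_i$.

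\textbf{The LP solution and the new distribution.} With $M_i=O$ and all $u\in M_i$ outside all sets of $\support(\cD_{i-1})$, the feasible assignment from item~\eqref{obs:valid_assignment} of Observation~\ref{obs:simple} becomes $x(u,S)=1/k$ for all $u\in O,\,S\in\support(\cD_{i-1})$ and $\ell(S)=1-|O\setminus S|/k = 1-k/k = 0$; moreover the first LP constraint is tight for every $u$. I would argue this assignment is in fact optimal for $(P)$: the objective is $\sum_{u\in O}\bE_{S\sim\cD_{i-1}}[x(u,S)f(u\mid S)]$, and since $f(u\mid S)$ for $u\in O$ depends only on $|S\cap Y|$ and is (by the above) the largest marginal available, pushing each $\bE[x(u,S)]$ up to its cap $1/k$ is optimal; this is a genuine extreme point because the $k$ "$\le 1/k$" constraints plus the $|\support(\cD_{i-1})|$ equality constraints are all tight and pin down all variables. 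Then, by the construction rule, $\cD_i$ is obtained from each $S\in\support(\cD(z_{i-1}))$ by transferring a $1/k$ fraction of its mass to each of $S+u$, $u\in O$. Here is the one genuinely delicate point, and the step I expect to be the main obstacle: I must show that this transfer, applied to the two-level cyclic family $\cD(z_{i-1})$, reproduces \emph{exactly} the two-level cyclic family $\cD(z_i)$ with $z_i = z_{i-1}+\tfrac1k(1-z_{i-1}) = 1-(1-\tfrac1k)^i$. The subtlety is that adding an $O$-element changes which sets are "large" versus "small" only through the $O$-coordinate, not the $Y$-coordinate — wait, that's wrong: $\cD(z)$ as defined lives on subsets of $Y$, so I need to re-examine whether $M_i$ should be $O$ or rather a cyclic shift set inside $Y$. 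Re-reading the definition of $\cD(z)$ and Lemma~\ref{lem:possible_run}, the sets $S^L_i,S^S_i\subseteq Y$ are cyclic intervals, so the correct choice must be $M_i=\{u_{i+\lfloor kz_{i-1}\rfloor+1}\text{-type shifts}\}$ — i.e. $M_i$ consists of the $k$ distinct elements of $Y$ that, when appended to the appropriate cyclic interval, extend it by one; I would carefully set up this bookkeeping, verify via the previous observation that each such $u$ satisfies $\Pr[u\notin S]$ as needed and that the $x(u,S)=1/k\cdot\mathbf 1[u\notin S]$ assignment is feasible, optimal, and extreme, and then do the combinatorial accounting showing the large/small state sizes and probabilities update precisely to those of $\cD(z_i)$. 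I would organize this accounting by tracking, for a fixed element $u\in Y$, that its inclusion probability goes from $z_{i-1}$ to $z_{i-1}+\tfrac1k(1-z_{i-1})$, matching $z_i$, and that symmetry among cyclic shifts is preserved because the transfer rule is itself shift-invariant; combined with the fact that $\cD(z)$ is the \emph{unique} shift-symmetric two-level distribution with the given inclusion probability, this forces $\cD_i=\cD(z_i)$ and closes the induction.
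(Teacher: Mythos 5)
There are genuine gaps at the two places where the proof actually has content. First, your marginal comparison goes the wrong way. You claim the expected $Y$-marginal is \emph{smaller} than the expected $O$-marginal and conclude the algorithm may take $M_i=O$; but $M_i$ is forced to be (a set of) the $k$ elements with the largest expected marginals, so to keep the run inside the family of subsets of $Y$ you must prove $\bE_{S\sim\cD_{i-1}}[f(u_y\mid S)]\ge \bE_{S\sim\cD_{i-1}}[f(u_o\mid S)]$, which is exactly what the paper establishes (this is where the $\ell\cdot|S\cap Y|/k^2$ term and the assumption $k\ge\ell$ with $\ell$ a large constant are used), yielding $M_i=Y$. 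Your mid-proof backtrack changes $M_i$ to a subset of $Y$ on the grounds that ``$\cD(z)$ lives on subsets of $Y$,'' but that is not a justification the algorithm respects: if the $O$-marginals really were strictly larger, no legitimate run could avoid them, and the whole lemma would fail. The inequality you assert would in fact make the tight example impossible, so this is not a cosmetic slip.

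Second, the LP step is wrong as proposed. The assignment $x(u,S)=\tfrac1k\cdot\characteristic[u\notin S]$ from Observation~\ref{obs:simple} is (i) generally not optimal, because for $u\in Y$ the marginal $f(u\mid S)$ is strictly larger on the smaller cyclic intervals (strict concavity of $g$), so an optimal solution must load the per-element cap $\bE[x(u,S)]\le\tfrac1k(1-z_{i-1})$ preferentially onto the small sets; (ii) not an extreme point, since it has up to $|Y|\cdot|\support(\cD_{i-1})|$ nonzero variables while an extreme point can have at most $k+|\support(\cD_{i-1})|$; and (iii) it does not reproduce $\cD(z_i)$ at all: it spreads mass from each cyclic interval to \emph{every} element of $Y$ outside it, destroying the cyclic-interval structure, so your closing appeal to ``the unique shift-symmetric two-level distribution with inclusion probability $z_i$'' does not apply to the distribution this solution actually generates (and that uniqueness claim is itself unproven). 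The paper instead exhibits, in two cases according to whether $\lfloor kz_{i-1}\rfloor=\lfloor kz_i\rfloor$ or $\lfloor kz_i\rfloor=\lfloor kz_{i-1}\rfloor+1$, an explicit solution that transfers mass only to the single element extending each interval cyclically (with weights such as $x(u_{j+\lfloor kz_i\rfloor},S^S_j)=(z_i-z_{i-1})/\Pr_{\cD_{i-1}}[S^S_j]$), and then verifies feasibility, optimality (small sets first, caps saturated), and extremality (unique maximizer of a suitable auxiliary objective $c\cdot x$). Your proposal acknowledges this accounting as ``the main obstacle'' but does not carry it out, and the per-element inclusion-probability bookkeeping you offer in its place does not determine the distribution.
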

\begin{proof}
We prove the lemma by induction. Notice that $\cD(z_0)$ puts all the probability on the empty set, thus, it is identical to $\cD_0$. This completes the proof of the base case. Next, assume that Algorithm~\ref{alg:DetRandGreedy} choosed $\cD_{i - 1} = \cD(z_{i - 1})$ for some $1 \leq i \leq k$, and let us prove that it can end up with $\cD_i = \cD(z_i)$. Let us start with analyzing the marginal contributions of the various elements to a random set from $\cD(z_{i - 1})$.

Let $z'_{i - 1} = \lfloor k z_{i - 1}\rfloor / k$, and note that $z'_{i - 1} \leq z_{i - 1} \leq z'_{i - 1} + \nicefrac{1}{k}$. Consider an arbitrary element $u_o \in O$. Every set $S \in \support(\cD_{i - 1})$ contains either $kz'_{i - 1}$ or $kz'_{i - 1} + 1$ elements of $Y$, and thus, by submodularity, we can lower bound the expected marginal contribution of $u_o$ to a random set of $\cD_{i - 1}$ with its marginal contribution to a set containing $kz'_{i - 1}$ elements of $Y$. By~\eqref{eq:marginal_O} we now get:
\begin{align*}
	\bE_{S \sim \cD_{i - 1}} [f(u_o \mid S)]
	\leq{} &
	\frac{1}{k}\left((1-z'_{i - 1})- \left(g(z'_{i - 1})+\frac{\ell z'_{i - 1}}{k}\right)\right)\\
	\leq{} &
	\frac{1}{k}\left((1-z'_{i - 1})- (z'_{i - 1} - 1)\ln(1 - z'_{i - 1})\right)
	=
	\frac{1}{k}(1-z'_{i - 1}) \cdot (1 + \ln(1 - z'_{i - 1}))
	\enspace.
\end{align*}

On the other hand, consider an element $u_y \in Y$. A random set from $\cD_{i - 1}$ contains $u_y$ with probability $z_{i - 1}$, in which case the marginal contribution of $u_y$ is $0$. Every other set $S$ in the distribution contains either $kz'_{i - 1}$ or $kz'_{i - 1} + 1$ elements of $Y$, and thus, by submodularity, we can upper bound the expected marginal contribution of $u_y$ to such a set with its marginal contribution to a set containing $kz'_{i - 1} + 1$ elements of $y$. By~\eqref{eq:marginal_Y} we now get:
\begin{align}
	\bE_{S \sim \cD_{i - 1}} [f(u_y \mid S)]
	={} &
	(1 - z_{i - 1}) \cdot \left(g(z'_{i - 1}+\nicefrac{2}{k})-g(z'_{i - 1}+\nicefrac{1}{k}) +\frac{\ell}{k^2}\right) \nonumber\\
  \geq{} &
	\frac{1}{k} \cdot (1 - z_{i - 1}) \cdot \left(g'(z'_{i - 1}+\nicefrac{2}{k}) +\frac{\ell}{k}\right) \label{ineq2-ex}\\
  \geq{} &
	\frac{1}{k} \cdot (1 - z'_{i - 1} - \nicefrac{1}{k}) \cdot \left(1 + \ln(1 - z'_{i - 1}-\nicefrac{2}{k}) +\frac{\ell}{k}\right) \nonumber
	\enspace,
\end{align}

where Inequality~(\ref{ineq2-ex}) follows by the concavity of $g$. 
Using the two inequalities we get,

\begin{align}
& \bE_{S \sim \cD_{i - 1}} [f(u_y \mid S)] - \bE_{S \sim \cD_{i - 1}} [f(u_o \mid S)]  \nonumber\\
 \geq{} & \frac{1}{k} \cdot \left(1 - z'_{i - 1}\right)\left(\ln\left(1- \frac{2}{k(1 - z'_{i - 1})}\right)+ \frac{\ell}{k}\right) - \frac{1}{k^2} \cdot \left(1 + \ln(1 - z'_{i - 1}-\nicefrac{2}{k}) +\frac{\ell}{k}\right) \nonumber \\
 \geq{} & \frac{1}{e\cdot k} \left(\ln\left(1- \frac{2e}{k}\right)+ \frac{\ell}{k}\right) - \frac{2}{k^2} \label{ineqq1}\\
 \geq{} & \frac{1}{e\cdot k}\left(-\frac{2e}{k\left(1-\frac{2e}{k}\right)} + \frac{\ell}{k}\right) - \frac{2}{k^2} \label{ineqq2}\\
 \geq{} & \frac{1}{e\cdot k}\left(-\frac{4e}{k} + \frac{\ell}{k}\right) - \frac{2}{k^2} \geq 0 \label{ineqq3}
	\enspace,
\end{align}
where Inequality~(\ref{ineqq1}) follows for a large enough $\ell$ ($\geq 4e$) since $k\geq \ell$ by our assumption and $0\leq z'_{i - 1} \leq 1-e^{-1}$. Inequality~(\ref{ineqq2}) follows by the inequality $\ln(1-y) \geq -\frac{y}{1-y}$, which holds for $y \in [0, 1)$. Finally, Inequality~(\ref{ineqq3}) and the last inequality both follow by considering a large enough $\ell$ ($\geq 6e$) and recalling that $k \geq \ell$.


Since the last inequality holds for every pair of elements $u_o \in O$ and $u_y \in Y$, it implies that the set $M_i$ chosen by the algorithm is exactly $Y$. Given this observation, the formulation $(P)$ of Algorithm~\ref{alg:DetRandGreedy} in iteration $i$ becomes:
\[\begin{array}{lllll}
	(P) & \max & \sum_{u\in Y} \bE_{S \sim \cD_{i - 1}}[x(u, S) \cdot f(u \mid S)] \mspace{-72mu}&& \\
	&& \bE_{S \sim \cD_{i - 1}}[x(u, S)] & \leq \nicefrac{1}{k} \cdot \left(1-z_{i - 1}\right) & \forall u\in Y\\
&& \sum_{u\in M_i} x(u, S) + \ell(S) & = 1 & \forall S \in \support(\cD_{i-1}) \\
&&x(u, S), \ell(S) & \geq 0 & \forall u \in M_i, S \in \support(\cD_{i-1})
\end{array}\]
We need to show that there exists an optimal extreme point solution for this formulation which makes the algorithm set $\cD_i = \cD(z_i)$. There are two cases to consider. If $\cD(z_{i - 1})$ and $\cD(z_i)$ have the same states (\ie, $\lfloor kz_{i - 1} \rfloor = \lfloor kz_i \rfloor$), then Algorithm~\ref{alg:DetRandGreedy} can come up with a solution $x^*$ for $(P)$ assigning $x(u_{j + \lfloor kz_i \rfloor}, S^S_j) = (z_i - z_{i - 1}) / \Pr_{\cD_{i - 1}}[S^S_j]$ for every $1 \leq j \leq k$ and the value $0$ to the other $x$ variables (the values of the $\ell$ variables are induced by the values of the $x$ variables, and thus, we do not state their assignment). The solution $x^*$ is feasible since, for every $1 \leq j \leq k$:
\begin{align*}
	\bE_{S \sim \cD_{i - 1}}[x(u_j, S)]
	={} &
	{\textstyle \Pr_{\cD_{i - 1}}}[S^S_j] \cdot \frac{z_i - z_{i - 1}}{\Pr_{\cD_{i - 1}}[S^S_j]}
	=
	[1-(1-\nicefrac{1}{k})^i] - [1-(1-\nicefrac{1}{k})^{i - 1}]\\
	={} &
	(1-\nicefrac{1}{k})^{i - 1}[1 - (1 - \nicefrac{1}{k})]
	=
	\nicefrac{1}{k} \cdot (1 - z_{i - 1})
	\enspace.
\end{align*}
It can be checked that $x^*$ indeed leads Algorithm~\ref{alg:DetRandGreedy} to set $\cD_i = \cD(z_i)$. To see that $x^*$ is optimal, notice that it adds elements only to the smaller sets (which results in a larger marginal gain by submodularity), and it adds every element to the maximal extent allowed by the first type of constraints. Finally, to see that $x^*$ is an extreme point solution notice that it is the only solution maximizing the objective function $c \cdot x$, where $c$ is a vector taking the value $1$ exactly in the coordinates for which $x^*$ is non-zero.

The second case we need to consider is when $\cD(z_{i - 1})$ and $\cD(z_i)$ have different states (\ie, 1 + $\lfloor kz_{i - 1} \rfloor = \lfloor kz_i \rfloor$). In this case Algorithm~\ref{alg:DetRandGreedy} can come up with a solution $x^*$ for $(P)$ assigning $x(u_{j + \lfloor kz_{i - 1} \rfloor}, S^S_j) = 1$ and $x(u_{j + \lfloor kz_i \rfloor}, S^L_j) = k^{-1}(kz_{i - 1} - z_{i - 1} - \lfloor kz_{i - 1} \rfloor) / \Pr[S^L]$ for every $1 \leq j \leq k$ and the value $0$ to the other $x$ variables. The solution $x^*$ is feasible since, for every $1 \leq j \leq k$:
\begin{align*}
	\bE_{S \sim \cD_{i - 1}}[x(u_j, S)]
	={} &
	{\textstyle \Pr_{\cD_{i - 1}}}[S^S_j] + {\textstyle \Pr_{\cD_{i - 1}}}[S^L_j] \cdot \frac{kz_{i - 1} - z_{i - 1} - \lfloor kz_{i - 1} \rfloor}{k \cdot \Pr_{\cD_{i - 1}}[S^L_j]}\\
	={} &
	\lfloor kz_{i - 1} + 1\rfloor/k - z_{i - 1} + \frac{kz_{i - 1} - z_{i - 1} - \lfloor kz_{i - 1} \rfloor}{k}
	=
	\nicefrac{1}{k} \cdot (1 - z_{i - 1})
	\enspace.
\end{align*}
It can be checked that $x^*$ again leads Algorithm~\ref{alg:DetRandGreedy} to set $\cD_i = \cD(z_i)$. To see that $x^*$ is optimal, notice that it adds as much as possible elements to the smaller sets (which results in a larger marginal gain by submodularity), and only the remaining capacity given by the first type of constraints is used to add elements to the larger sets. Finally, to see that $x^*$ is an extreme point solution notice that it is the only solution maximizing the objective function $c \cdot x$, where $c$ is a vector taking the value $k + 1$ in the coordinates for which $x^*$ is $1$ and the value $1$ in the other coordinates for which $x^*$ is non-zero.
\end{proof}

To complete the analysis of our bad instance, notice that $O$ is a feasible solution and $f(O) = 1$. On the other hand, Lemma~\ref{lem:possible_run} shows that Algorithm~\ref{alg:DetRandGreedy} may terminate with a distribution over subsets of $Y$. The value of $f$ for any such set $S$ is at most:
\[
	f(S)
	=
	g\left(\frac{|S|}{k}\right) + \frac{\ell \cdot |S|}{k^2}
	\leq
	e^{-1} + \frac{\ell}{k}
	=
	e^{-1} + O\left(\frac{1}{k}\right)
	\enspace.
\]

\end{document}